\newcommand{\vk}{\varkappa}
\newcommand{\BR}{\mathbb{R}}
\newcommand{\SL}{\sum\limits}
\newcommand{\al}{\alpha}
\newcommand{\de}{\delta}
\newcommand{\ME}{\mathbb E}
\newcommand{\MP}{\mathbf P}
\newcommand{\CE}{\mathcal E}
\newcommand{\CL}{\mathcal L}
\newcommand{\CK}{\mathcal K}
\newcommand{\si}{\sigma}
\newcommand{\pa}{\partial}
\renewcommand{\phi}{\varphi}
\newcommand{\la}{\lambda}
\newcommand{\ol}{\overline}
\newcommand{\CM}{\mathcal M}
\newcommand{\norm}[1]{\lVert#1\rVert}
\newcommand{\md}{\mathrm{d}}
\DeclareMathOperator{\diag}{diag}
\DeclareMathOperator{\const}{const}
\DeclareMathOperator{\tr}{tr}
\DeclareMathOperator{\mes}{mes}
\begin{document}

\theoremstyle{plain}
\newtheorem{theorem}{Theorem}[section]
\newtheorem{lemma}[theorem]{Lemma}
\newtheorem{prop}[theorem]{Proposition}
\newtheorem{cor}[theorem]{Corollary}

\theoremstyle{definition}
\newtheorem{defn}{Definition}
\newtheorem{asmp}{Assumption}
\newtheorem{remark}{Remark}
\newtheorem{exm}{Example}

\title[Interbank flows, Borrowing, and Investing]{Modeling Financial System with Interbank Flows,\\ Borrowing, and Investing}

\author{Aditya Maheshwari}

\address{\scriptsize Department of Statistics and Applied Probability, University of California, Santa Barbara}

\email{aditya\_maheshwari@ucsb.edu}

\author{Andrey Sarantsev}

\address{\scriptsize Department of Mathematics and Statistics, University of Nevada, Reno} 

\email{asarantsev@unr.edu}

\begin{abstract}
In our model, private actors with interbank cash flows similar to, but nore general than (Carmona, Fouque, Sun, 2013) borrow from the outside economy at a certain interest rate, controlled by the central bank, and invest in risky assets. Each private actor aims to maximize its expected terminal logarithmic utility. The central bank, in turn, aims to control the overall economy by means of an exponential utility function. We solve all stochastic optimal control problems explicitly. We are able to recreate occasions such as liquidity trap. We study distribution of the number of  defaults (net worth of a private actor going below a certain threshold).
\end{abstract}

\date{\today. Version 26}

\subjclass[2010]{60H10, 60H30, 60J60, 60K35, 91A15, 91B70, 91G80, 93E15}

\thanks{\textit{2017 JEL Classification.} C61, E43, E44, E52, G11. \\ \indent The first and the second author are supported in part by NSF grants DMS 1736439 and DMS 1409434.\\ \indent We are grateful to \textsc{Rene Carmona}, \textsc{Zachary Feinstein}, \textsc{Jean-Pierre Fouque}, \textsc{Matheus Grasselli}, \textsc{Ioannis Karatzas}, \textsc{Andreea Minca}, and \textsc{Soumik Pal} for useful discussion.}

\keywords{Systemic risk, stochastic control, principal-agent problem, stochastic game, stationary distribution, stochastic stability, Lyapunov function}

\maketitle

\thispagestyle{empty}

\section{Introduction}

We are interested in modeling interaction between utility-maximizing private actors (which for simplicity we call {\it private banks} or simply {\it banks}) and a central bank, which regulates borrowing activity via an interest rate. Private banks exchange (exogenous) cash flows, and borrow from the general economy to invest in profitable but risky assets. This central bank can lower interest rate to stimulate financial activity by private actors, or increase this rate to cool this activity down. Sometimes, however, there are not many profitable investments. Then the private actors do not borrow at all, while the central bank is not able to remedy this even by lowering the rate to zero; this is called {\it liquidity trap}. 

\smallskip

We use the utility maximization approach: private actors borrow and invest to maximize their logarithmic utility, and the central bank applies its own exponential (which is sometimes referred to as CARA = constant absolute risk aversion) utility function to a certain variable which measures the size of the entire financial system. The utility functions are chosen so that the central bank (interested in stability of the system overall) is more risk-averse than private actors (interested only in their own net worth). Because of the special choice of utility functions, we are able to solve all corresponding Hamilton-Jacobi-Bellman equations explicitly. 

\smallskip

Logarithmic utility, as shown in \cite{Myopic}, corresponds to myopic decision-making; in other words, private actors are short-sighted. The central bank is more risk-averse, and sometimes it needs to reduce overall risk by reducing the interest rate. 

\smallskip

We mention the concept of {\it systemic risk}, which can be informally described as the probability of a large number of banks defaulting or getting into financial trouble. We understand {\it default} or {\it failure} of a bank as its net worth (assets minus liabilities) going below a given threshold. We are interested in probability of this undesirable event; of the mechanism of such failure; and of the {\it financial contagion}, when failure of a few banks leads to many more failures. We refer the reader to the handbook \cite{Handbook} containing many different approaches to systemic risk. Our work is inspired by the model introduced in \cite{JP-Rene} and also described in \cite[Section 5.5]{ReneBook}. 

\smallskip

If $X_i(t)$ is the wealth and $Y_i(t) := \log X_i(t)$, authors in \cite{JP-Rene} model the banking system as a system of $N$  continuous-time stochastic processes $Y_1, \ldots, Y_N$, with multidimensional Ornstein-Uhlenbeck dynamics. The stochastic differential equations are given by:
\begin{equation}
\label{eq:basic-mean-field}
\md Y_i(t) = a\left(\ol{Y}(t) - Y_i(t)\right)\,\md t + \sigma\md W_i(t),\ \ i = 1, \ldots, N,
\end{equation}
with i.i.d. Brownian motions $W_1, \ldots, W_N$, constants $a, \sigma > 0$, and
\begin{equation}
\label{eq:mean}
\ol{Y}(t) = \frac1N\SL_{i=1}^NY_i(t),\quad t \ge 0.
\end{equation}

The constant $a$ is referred to as the \textit{interbank flow rate}. In \cite{JP-Rene}, these mean-reverting drifts are generated by the decisions of banks to borrow money from one another. Their decisions are done by minimizing a certain cost functional, which measures, roughly speaking, the preference of a bank to borrow from other banks, as opposed to borrowing from the central bank.  Authors  discuss both the finite player solution and the mean-field limit of the problem in the context of systemic risk. 

\smallskip

An important observation: Apply It\^o's formula to rewrite equations~\eqref{eq:basic-mean-field} in terms of $X_i$ (the actual net worth of the $i$th bank) instead of $Y_i = \log X_i$. Then the interbank flows derived from Ornstein-Uhlenbeck-type terms $a(\ol{Y}(t) - Y_i(t))\,\md t$ do not add up to $0$. One can think that the remainder comes from (or to, depending on the sign) the real economy. Nevertheless, the model~\eqref{eq:basic-mean-field} attracted a lot of attention because of its simplicity and analytical tractability. In this article we shall build on it, extending these Ornstein-Uhlenbeck-type terms to be heterogeneous; see below.

\smallskip

We further explore the individual decision-making of the private banks and extend the role of the central bank. Furthermore, we analyze how this decision-making affects the stability of the system. More specifically, we extend the model by assuming that each private bank invests in a risky portfolio of assets, borrows money from the general economy to invest in this portfolio (with interest rate controlled by the central bank), pockets the profit, and pays back the interest. 

\smallskip

Unlike \cite{JP-Rene}, where the decision making of the central bank is not analyzed, in our model the central bank uses interest rate as a policy tool (to govern the behavior of the private banks). It is derived as a solution of the control problem solved by the central bank.

\smallskip

We assume two kinds of players in our model: private banks and the central bank. Private banks want to maximize the terminal logarithmic wealth:
\begin{equation}
\label{eq:optimal-private-bank}
\sup\ME\left[\log X_i(T)\right] = \sup\ME\left[Y_i(T)\right],\, i = 1, \ldots, N, 
\end{equation}
through borrowing and investing in the portfolio of risk assets. For simplicity, we assume the portfolios of private bank to be correlated  geometric Brownian motions. The choice of logarithmic utility function allows us to solve the corresponding Hamilton-Jacobi-Bellman (HJB) equation explicitly. 

\smallskip

On the other hand, the central bank wants to control the overall size of the financial system, using {\it interest rate} $r > 0$ as a monetary policy instrument, which measures how attractive it is for banks to borrow and invest, as opposed to sitting on cash. As we see in Section 4, this interest rate $r$ controls the overall size of the system, measured by $\ol{Y}$ from~\eqref{eq:mean}.
This is {\it not} the average net worth of banks; this is the average of the logarithms of net worth. This measure is somewhat non-standard; however, it is more appropriate for our model, when dynamics in~\eqref{eq:basic-mean-field} is written in terms of logarithms $Y_i$ of net worths $X_i$. This measure is used in \cite{JP-Rene} and subsequent papers, so we feel justified in using it here. The central bank chooses $r$ to maximize terminal expected exponential utility (sometimes in the literature it is called CARA: constant relative risk aversion):
\begin{equation}
\label{eq:optimal-central-bank}
\ME \left[-\exp(-\la \ol{Y}(T))\right]\ \mbox{for some}\ \la > 0.
\end{equation}
This corresponds to the central bank being even more risk-averse than private banks: Private banks have utility function which is linear in $Y_i(t)$, and the central bank has utility function which is concave in these variables. As for the private actors, we can solve the corresponding Hamilton-Jacobi-Bellman equation explicitly and find the optimal interest rate $r$. This is due to the special choice of 

\smallskip

Given the choice of the interest rate $r$, we can solve the stochastic optimal control problem explicitly for each player: private banks and the central bank. This is due to the special choice of logarithmic utility function for private banks in~\eqref{eq:optimal-private-bank}, and for the central bank in~\eqref{eq:optimal-central-bank}. For other choices of utility function, it is probably impossible to solve this optimal control problem explicitly. Then one could try to use mean-field limits, as in \cite{Lacker3, Lacker2, Lacker1}. This topic is left for future research. 

\smallskip

This setup somewhat resembles the principal-agent problem: the principal (now the central bank) allows private banks to borrow from  the economy, and private banks (agents) maximize their expected logarithmic terminal utility (their contract). 

\smallskip

Under such optimal choices of the actors, we study the dynamics of logarithmic net worth of banks, and the distribution of defaults. A default of the $i$th bank is understood in the same way as above: when $X_i(t)$, the net worth of this bank, falls below some fixed positive  threshold. This leads us to understanding systemic risk in this model: how defaults of a few banks can lead to defaults of many other banks, see subsection 3.4. 

\smallskip

Besides incorporating the optimal strategy of the central bank, we also generalize the model~\eqref{eq:basic-mean-field} by allowing interbank flow rates from bank $i$ to bank $j$ to depend on the banks $i, j$, and on time $t$, denoting this rate by $c_{ij}(t)$:
\begin{equation}
\label{eq:enhanced-mean-field}
\md Y_i(t) = \frac1N\SL_{j=1}^Nc_{ij}(t)\left(Y_j(t) - Y_i(t)\right)\,\md t + \md W_i(t).
\end{equation}
Here, we assume that the flow rates satisfy 
$$
c_{ij}(t) = c_{ji}(t),\ i \ne j;\ c_{ii}(t) = 0,\ i = 1, \ldots, N.
$$
This heterogeneity, together with Ornstein-Uhlenbeck dynamics, resembles to some extent the model by \cite{Kluppelberg2}. As we see in Section 4, the matrix $(c_{ij}(t))$ of interbank flow rates corresponds to the stability of the system. We also allow for Brownian motions $W_1, \ldots, W_N$ to have drifts and to be correlated: that is, we assume $W = (W_1, \ldots, W_N)$ is an $N$-dimensional Brownian motion with drift vector $\mu$ and covariance matrix $A$. 

\smallskip

The observation made above for~\eqref{eq:basic-mean-field} can be applied for~\eqref{eq:enhanced-mean-field}: If we rewrite these equations using It\^o's formula in terms of $X_i$ (the actual net worth) instead of $Y_i = \log X_i$, then the interbank flows do not add up to $0$. As before, one can think that the remainder comes from the real economy. But we shall build on the model~\eqref{eq:basic-mean-field} from \cite{JP-Rene}, which attracted a lot of research interest because of its tractability and lucidity.

\smallskip

To summarize, the topic of this paper is optimal decisions of private banks, and interaction of these individual decisions with each other, as well as with that of the central bank. Systemic risk is a consequence of the optimal decision making of both the central bank and the private banks in the economy.


\subsection{Contributions}  Our work here is inspired by \cite{JP-Rene}, however, unlike them, we consider utility maximization for both the central bank and the private banks.  We first solve for the optimal investment of the private banks in their portfolios of assets, given the interest rate set by the central bank. Next, we turn around and find the optimal interest rate for the central bank, given the strategy of the private banks and its own risk aversion. Unlike in \cite{JP-Rene}, where the interbank flow rate was constant, here we generalize the flow rates to be different for each pair of private banks. We consider dependence of the distribution of defaults on the correlation between risky assets, and on the interest rate set by the central bank. 

\subsection{Organization of the paper} In Section 2, we describe the model in terms of stochastic control problem for a system of stochastic differential equations. In Section 3, we solve the stochastic control problem for each private bank, and in Section 4, for the central bank (given optimal control for each private bank). In particular, in Section 3, we study distribution of the number of defaults. This is where we touch the concept of systemic risk: We are interested in its dependence on the parameters of the system, for example correlations between various risky investments. Section 5 contains results on long-term stability of the system: the fact that the capitals of banks tend to stay close, as opposed to splitting into two or more groups. Section 6 is devoted to concluding remarks and suggestions for future research. The Appendix contains some technical proofs. 

\subsection{Review of related models} 
The fundamental wealth dynamics~\eqref{eq:basic-mean-field} of this model has been studied under different settings. For example, in \cite{Delay}, a similar system was studied with time delay. Large deviations were studied in \cite{Garnier}. Without attempting to give an exhaustive survey, let us mention the following papers, which use stochastic differential equations and interacting Brownian particles to model dynamics of capitals of banks or other financial agents. \cite{Award}  use a system of stochastic differential equations with Bessel-type diffusion coefficients  to model simultaneous defaults (in this model, a {\it default} is when the capital reaches zero). Authors in \cite{Bo, Sun} combine such Bessel-type diffusion coefficients with a mean-field-type drift term, with \cite{Bo} having an additional jump term (therefore the processes there are jump-diffusions). 

\smallskip

In the paper \cite{Sergey}, the financial system is modeled by independent geometric Brownian motions, with defaults happening at hitting times of some lower threshold. Once a bank defaults, other banks see their capital decrease by a certain amount, possibly triggering a cascade of defaults. \cite{MFG} introduce {\it mean-field game of timing}. The term {\it game of timing} refers to a game where each player chooses an optimal stopping time. {\it Mean-field game of timing} is when a player competes against a ``crowd'' of other agents, instead of individual competitors; this can be informally viewed as a limit of games of timing as the number of players tends to infinity. This models a bank run, continuing the research in a celebrated paper \cite{Diamond}.

\smallskip

Let us also mention the paper  \cite{Thaleia}, which models the dynamics by geometric Brownian motions, studying mean-field relative performance criteria: An agent competes against a ``crowd''. We maximize the agent's performance compared to the performance of the ``crowd''. 

\smallskip

In a recent paper \cite{New1}, banks are organized in clusters.  The interbank transactional dynamics is modeled through a set of interacting measure-valued processes. Implications of shocks arising in a cluster are studied. 

\subsection{Notation} For a vector or a matrix $a$, its transpose is denoted by $a'$. We usually think of vectors as column-vectors. The dot product of two vectors $a$ and $b$ is denoted by $a\cdot b$. The term {\it standard Brownian motion} stands for a one-dimensional Brownian motion with drift coefficient $0$ and diffusion coefficient $1$. For $V \equiv 1$, this is called the {\it total variation norm}. Fix a dimension $N \ge 2$. Then $e \in \BR^N$ is a vector $(1, \ldots, 1)'$ with unit components, and we define the following hyperplane in $\BR^N$:
$$
\Pi := \{x \in \BR^N\mid x\cdot e = 0\} = \{x \in \BR^N\mid x_1 + \ldots + x_N = 0\}. 
$$
Define the (closed) ball of radius $r$ on $\Pi$ centered at the origin:
\begin{equation}
\label{eq:ball}
\mathcal B(r) := \{x \in \Pi\mid \norm{x} \le r\}. 
\end{equation}
The $(N-1)$-dimensional Lebesgue measure on $\Pi$ is denoted by $\mes_{\Pi}(\cdot)$. As mentioned above, the symbol $1(A)$ or $1_A$ stands for the indicator function of an event $A$. 


\section{Description of the model}

\subsection{Formal description} Consider a system of $N$ agents (we call them {\it private banks}) which continuously lend money to each other, borrow from the outside economy, pay back the interest, and invest in some risky portfolios.  

\smallskip

We operate on filtered probability space $(\Omega, \mathcal{F}, ( \mathcal{F}_t)_{t\ge0}, \mathbb{P})$ with the filtration satisfying the usual conditions. All the processes which we consider are adapted to the $( \mathcal{F}_t)_{t\ge0}$. Let $X_i(t) > 0$ be the net worth (assets minus liabilities) of the $i$th bank at time $t$, for $i = 1, \ldots, N$. Let $Z_i(t)$  be the amount borrowed at the moment $t$ by the $i$th private bank from the outside economy. Assume the interest rate for such borrowing is $r(t) \ge 0$, controlled by the central bank. Then during the time interval $[t, t + \md t]$, the $i$th bank pays back interest $r(t)Z_i(t)\,\md t$. At time $t$, the $i$th bank has at its disposal the amount $X_i(t) + Z_i(t)$: its own capital plus borrowed amount. This amount $Z_i(t) \ge 0$ is controlled by the $i$th bank.

\smallskip

Alternatively, the $i$th bank might decide to not borrow anything, and instead to even put aside some of its own money in cash (which does not earn any interest). This happens if the investment is not very profitable, or, more precisely, if the return does not outweigh the risk. In this case, we let $Z_i(t) < 0$, and define $-Z_i(t)$ to be the quantity of cash put aside. The amount invested is still $X_i(t) + Z_i(t)$, but the bank does not pay or receive any interest. 

\smallskip

We combine these two cases: the $i$th bank invests the amount $X_i(t) + Z_i(t)$ at time $t$ into a risky portfolio, and pays interest $r(t)(Z_i(t))_+\,\md t$ during the time interval $[t, t + \md t]$. 

\smallskip

At time $t$, the $i^{th}$ bank invests in a portfolio of risky assets with value $S_i(t)$. The $i$th bank buys $(X_i(t) + Z_i(t))/S_i(t)$ units of this portfolio. Net profit for the time interval $[t, t + \md t]$ is 
$$
(X_i(t) + Z_i(t))\frac{\md S_i(t)}{S_i(t)}.
$$

Combining all of the above, we get the following system of equations:
\begin{equation}
\label{eq:main-noOU}
\md X_i(t) = (X_i(t) + Z_i(t))\frac{\md S_i(t)}{S_i(t)} - r(t)(Z_i(t))_+\,\md t \ \ i = 1, \ldots, N, \text{ and } X_i(0)>0.
\end{equation}
Next, we make some assumptions on $S_i$, the dynamics of the portfolio processes. A separate question is how banks construct these portfolios out of stocks and other risky assets. This question is separate from the topic of this paper, and we shall not study it here. Instead, we assume that these are geometric Brownian motions. This assumption is very simplifying, but we believe it captures to some extent the features of portfolios. The processes
$$
M_i(t) = \int_0^t\frac{\md S_i(s)}{S_i(s)},\ i = 1, \ldots, N,
$$
form an $N$-dimensional Brownian motion with drift vector $\mu = (\mu_1, \ldots, \mu_N)$ and covariance matrix $A = (a_{ij})_{i, j = 1, \ldots, N}$.  In particular, each $M_i,\, i = 1, \ldots, N$, is a Brownian motion with drift coefficient $\mu_i$ and diffusion coefficient $\si_i^2 := a_{ii}$, so it can be represented as 
\begin{equation}
M_i(t) = \mu_i t + \si_iW_i(t),
\label{eqn:assetPrice}
\end{equation}
where $W_i$ is a one-dimensional standard Brownian motion. Although the portfolio process \eqref{eqn:assetPrice} is driven by only one Brownian motion, a more general representation: 
\begin{equation}
    \frac{\md S_i(t)}{S_i(t)} = \mu_i dt + \sum_{j=1}^{m}\sigma_{i,j}\md B_j(t),
\end{equation}
where $(B_1,\ldots,B_m)$ are Brownian motions, can also be considered in our framework. Since $\sigma_i\md W_i(t) := \sum_{j=1}^{m}\sigma_{i,j}\md B_j(t)$, but $W_i$ is also a Brownian motion, we fall back to the original portfolio process \eqref{eqn:assetPrice}. 

\smallskip

The covariance between Brownian motions $(W_1, \ldots, W_N)$ can be modeled in various ways. 

\smallskip

{\bf (1.a)} All $W_1, \ldots, W_N$, are independent. Then the matrix $A$ is diagonal: 
\begin{equation}
\label{eq:diagonal-A}
A = \diag(\si_1^2, \ldots, \si_N^2).
\end{equation}
This means that the portfolios of banks are independent. 

\smallskip

{\bf (1.b)} All $W_1, \ldots, W_N$, are the same: $W_1 = W_2 = \ldots = W_N$. This means that all banks, in fact, use the same portfolio, and they are perfectly correlated. Then it makes sense to let $\mu_1 = \ldots = \mu_N$ and $\si_1 = \ldots = \si_N$. 

\smallskip

{\bf (1.c)} An intermediate case: for some i.i.d. Brownian motions $\tilde{W}_i,\, i = 0, \ldots, N$, and some coefficients $\rho_0, \tilde{\rho}_0$ with $\rho^2_0 + \tilde{\rho}^2_0 = 1$ we have:
\begin{equation}
\label{eq:correlated-assets}
W_i(t) := \rho_0\tilde{W}_i(t) + \tilde{\rho}_0\tilde{W}_0(t),\ i = 1, \ldots, N. 
\end{equation}

One can also split $N$ banks into subsets and construct dependence as in Case (1.c) for each subset; portfolio processes corresponding to different subsets are assumed to be independent. 

\subsection{Main system of driving stochastic equations} Apply It\^o's formula to find the dynamics of $Y_i(t) := \log X_i(t)$:
\begin{equation}
\label{eq:log-0}
\md Y_i(t) = \frac{\md X_i(t)}{X_i(t)} - \frac{\md\langle X_i\rangle_t}{2X_i^2(t)}
\end{equation}
Combining~\eqref{eq:main-noOU} with~\eqref{eq:log-0}, we get our main stochastic equation, driving banks' wealth. For now, it does not contain interbank flows, which are Ornstein-Uhlenbeck-type drifts as in~\eqref{eq:basic-mean-field} or~\eqref{eq:enhanced-mean-field}:
\begin{equation}
\label{eq:log}
\md Y_i(t) = (1 + \al_i(t))\si_i\,\md W_i(t)  + h_i(\al_i(t), r(t))\,\md t.
\end{equation} 
Here we define the {\it relative investment ratio:}
$$
\al_i(t) = \frac{Z_i(t)}{X_i(t)},\ \ t \ge 0,\ \ i = 1, \ldots, N,
$$
and the following quantity:
\begin{equation}
\label{eq:new}
h_i(\al, r) := (1 + \al)\mu_i - \frac{\si_i^2}2(1 + \al)^2 - r\al_+\ \ \mbox{for}\ \ \al, r \in \BR.
\end{equation}
Finally, the $i$th bank also interacts with other banks, having cash flow in and out. In \cite{JP-Rene} and subsequent papers, this interaction is modeled by  Ornstein-Uhlenbeck-type drifts 
\begin{equation}
\label{eq:OU}
a(\ol{Y}(t) - Y_i(t))
\end{equation}
from~\eqref{eq:basic-mean-field}, with  $Y_i(\cdot) = \log X_i(\cdot)$. Here, we take  drifts 
\begin{equation}
\label{eq:OU-general}
\frac1N\sum\limits_{j=1}^Nc_{ij}(t)(Y_j(t) - Y_i(t))
\end{equation}
from~\eqref{eq:enhanced-mean-field}, which are more general than~\eqref{eq:OU}, and add them  to~\eqref{eq:log}. Note that in our model, as in \cite{JP-Rene}, the cash flows (in the original scale, not logarithmic one) do not necessarily add up to zero. Consider possible particular cases:

\smallskip

{\bf (2.a)} All $c_{ij}(t) \equiv 0$. Then there are no cash flows between banks.

\smallskip

{\bf (2.b)} All $c_{ij}(t) \equiv c(t) > 0$. For a constant $c$, this is the model from \cite{JP-Rene}. 

\smallskip

{\bf (2.c)} Let $G$ be a graph on vertices $\{1, \ldots, N\}$. Then 
\begin{equation}
\label{eq:graph-model}
c_{ij}(t) = c(t)1(i \leftrightarrow j)\ \mbox{for some}\ c(t) > 0.
\end{equation}

After superimposing these Ornstein-Uhlenbeck-type drifts from~\eqref{eq:OU-general} on top of~\eqref{eq:log}, our main driving equation takes the form
\begin{align}
\label{eq:log-JP}
\begin{split}
\md Y_i(t)  = (1 + \al_i(t))&\si_i\,\md W_i(t) + h_i(\al_i(t), r(t))\,\md t \\ & + \frac1N\SL_{j=1}^Nc_{ij}(t)\left(Y_j(t) - Y_i(t)\right)\,\md t,\ \ i = 1, \ldots, N.
\end{split}
\end{align}
Equation~\eqref{eq:log-JP} resembles the model from \cite{JP-Rene}. However, it also has significant differences: the volatility in~\eqref{eq:log-JP} can be controlled, unlike in \cite{JP-Rene}; and the drift coefficient in~\eqref{eq:log-JP} is a bit more complicated. For  {\it homogeneous rates:} $c_{ij}(t) \equiv c(t)$, the equation~\eqref{eq:log-JP} takes the form
\begin{align}
\label{eq:heterogeneous-main}
\begin{split}
\md Y_i(t) = (1 + \al_i(t))\si_i\,\md W_i(t) + h_i(\al_i(t), r(t))\,\md t  + c(t)(\ol{Y}(t) - Y_i(t))\,\md t,
\end{split}
\end{align}
for $i = 1, \ldots, N$, where $\ol{Y}(t)$ is defined in~\eqref{eq:mean}. 

\subsection{Interpretation} As in \cite{JP-Rene}, we consider bank $i$ to be in bankruptcy at time $t$ if $X_i(t) < e^D$, where $D$ is a given threshold, stipulated by the central bank. The central bank would like to stimulate the activity of banks by persuading them to take risks, but not too much, lest they may become bankrupt. Equation~\eqref{eq:log} means that the central bank can use interest rate $r(t)$ as a monetary policy tool to alter the behaviour of the private-banks.

\smallskip

Assume that banks start borrowing too much money and investing them in risky assets (leveraging). By doing this, they increase their probability of default. Then the central bank can raise this interest rate to discourage private banks from excessive borrowing. Conversely, if banks are too cautious in borrowing against future profits and risk-taking, then the central bank can stimulate them by lowering the interest rate. As we see later, this interest rate affects the overall state of the system.

\smallskip

The parameter $r(t)$ is determined by the central bank and given to all private banks. These private banks then determine the investment rates $\al_i(t)$, independently of each other. In light of decision-making of the banks, the central bank needs to determine optimal values of these parameters. This setup is similar to the principal-agent problem, but with many agents. 

\section{Optimal behavior of private banks} 

\subsection{Statement of the problem} We assume that the $i$th bank takes as given the  capital of other banks: $X_j(t),\, j \ne i$ (or, equivalently, $Y_j(t) := \log X_j(t),\, j \ne i$), as well as  the interest rate $r(t)$ (the instrument of the monetary policy). The bank is trying to choose the relative investment ratio $\al_i(t)$, or, equivalently, the amount borrowed $Z_i(t)$, to maximize its expected terminal logarithmic wealth:
\begin{equation}
 \sup\limits_{\al_i}\, \ME\left[\log X_i(T)\right],
\label{eqn:opt_problem}    
\end{equation}
where the supremum in~\eqref{eqn:opt_problem} is taken over all bounded adapted controls $\al_i = (\al_i(t),\, 0 \le t \le T)$.  Assume that the interest rate $r(t)$ is already set by the central bank. In this section, we solve this stochastic control problem explicitly. This corresponds to the agent's problem in the principal-agent framework. In the next section, we discuss the optimal policy choices of the central bank (the principal). 

\subsection{Solution of the problem} This specific choice of the utility function, which is linear in $Y_i$, and the interbank flows, which are also linear in $Y_i$ in~\eqref{eq:log}, makes this optimization problem tractable: We can solve this explicitly. 

\begin{theorem} For the optimization problem stated in \eqref{eqn:opt_problem}, where $\al_i$ is bounded, adapted on $[0,T]$, the following value of $\al_i$ is optimal for the $i$th private bank:
\begin{equation}
\label{eq:alpha-max}
\al_i^*(t) := 
\begin{cases}
\left(\frac{\mu_i - r(t)}{\si_i^2} - 1\right)_+,\ \ \mu_i \ge \si_i^2;\\
\frac{\mu_i}{\si^2_i} - 1,\ \ \mu_i \le \si_i^2.
\end{cases}
\end{equation}
\label{thm:agent}
\end{theorem}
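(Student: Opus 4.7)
The plan is to reduce this stochastic control problem to a pointwise optimization of the scalar function $h_i(\cdot,r(t))$, exploiting the fact that both the terminal reward $Y_i(T)$ and the state equation~\eqref{eq:log-JP} are linear in $Y_i$. No dynamic programming is really needed: taking expectations kills the It\^o term, and the only $\al_i$-dependence that survives is through the drift contribution $h_i(\al_i(t),r(t))$.

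First I would freeze the trajectories of the other banks. From the $i$th bank's perspective, setting $\kappa_i(t) := \frac{1}{N}\SL_{j=1}^N c_{ij}(t)$ and $\phi_i(t) := \frac{1}{N}\SL_{j=1}^N c_{ij}(t) Y_j(t)$, equation~\eqref{eq:log-JP} becomes the scalar linear SDE
$$
\md Y_i(t) = (1+\al_i(t))\si_i\,\md W_i(t) + \bigl[h_i(\al_i(t),r(t)) + \phi_i(t) - \kappa_i(t) Y_i(t)\bigr]\md t.
$$
Boundedness of $\al_i$ makes the It\^o integral a true martingale, so $f(t) := \ME[Y_i(t)]$ satisfies the linear ODE $f'(t) = \ME[h_i(\al_i(t),r(t))] + \ME[\phi_i(t)] - \kappa_i(t) f(t)$, whose value at time $T$ is
$$
f(T) = e^{-K(T)} Y_i(0) + \IL_0^T e^{K(s)-K(T)}\bigl(\ME[h_i(\al_i(s),r(s))] + \ME[\phi_i(s)]\bigr)\md s,
$$
with $K(t) := \IL_0^t\kappa_i(u)\,\md u$. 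Only $\ME[h_i(\al_i(s),r(s))]$ depends on $\al_i$ and the weight $e^{K(s)-K(T)}$ is strictly positive, so maximizing $f(T)$ over bounded adapted controls is equivalent to maximizing $\al\mapsto h_i(\al,r(s))$ pointwise in $(s,\omega)$. A measurable selector of this argmax is adapted because $r(s)$ is, and it is bounded since the candidate critical points lie in $[\mu_i/\si_i^2-1,\,\mu_i/\si_i^2]$.

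The last step is a calculus exercise on $h_i(\al,r) = (1+\al)\mu_i - \frac{\si_i^2}{2}(1+\al)^2 - r\al_+$, split on the sign of $\al$: on $\al \le 0$ the parabola $(1+\al)\mu_i - \tfrac12\si_i^2(1+\al)^2$ has its vertex at $1+\al = \mu_i/\si_i^2$, and on $\al \ge 0$ the interest penalty shifts this to $1+\al = (\mu_i-r)/\si_i^2$. Comparing the two local maxima with the junction value $h_i(0,r) = \mu_i - \si_i^2/2$ yields~\eqref{eq:alpha-max}: when $\mu_i \ge \si_i^2$ the nonpositive branch is increasing up to $\al = 0$ and the nonnegative branch produces the positive root, or $0$ when $r > \mu_i - \si_i^2$; when $\mu_i \le \si_i^2$ the nonnegative branch has derivative $\mu_i - \si_i^2 - r \le 0$ at the origin and is therefore decreasing on $\al \ge 0$, while the parking optimum beats the origin via the identity $\frac{\mu_i^2}{2\si_i^2} - (\mu_i - \si_i^2/2) = (\mu_i-\si_i^2)^2/(2\si_i^2) \ge 0$. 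The only technical subtlety, not really an obstacle, is the measurable-selection step that guarantees the pointwise argmax provides an admissible (bounded, adapted) control.
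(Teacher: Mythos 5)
Your proof is correct, and it takes a genuinely different route from the paper's. The paper sets up the Hamilton--Jacobi--Bellman equation for the value function $\Phi_i(t,y)$ on all of $\BR^N$, uses an Ansatz linear in $y$, observes that the second-order terms vanish, reduces the maximization to $h_i(\cdot,r(t))$ pointwise, solves a system of $N+1$ linear ODEs for the Ansatz coefficients, and then carries out a supermartingale verification argument. You skip the HJB machinery entirely: taking expectations in the scalar SDE for $Y_i$ (with the other banks' trajectories frozen, exactly as the paper stipulates in its Section 3.1 setup) produces a linear ODE for $f(t)=\ME[Y_i(t)]$, and Duhamel's formula exhibits $f(T)$ as a nonnegative-weighted integral of $\ME[h_i(\al_i(s),r(s))]$, so pointwise maximization of $h_i$ is immediate. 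What your route buys: it is shorter and self-contained; it makes the positivity of the ``weight'' on $h_i$ explicit via $e^{K(s)-K(T)}>0$, whereas the paper implicitly relies on $g_{ii}(t)>0$ (the coefficient of $h_i(\al_i,r)\,\pa\Phi_i/\pa y_i$) without verifying it, a small unacknowledged gap; and it dispenses with the verification argument because you never introduce a candidate value function that needs verifying. What you lose by abandoning the HJB framework is negligible here, since the linearity in $Y_i$ is the whole reason the problem is tractable either way, and both proofs terminate in the same calculus exercise on $h_i$ (the paper's Lemma A.3 and your case analysis match exactly). Two minor points: (i) your derivation uses that $\kappa_i(t)=\tfrac1N\SL_j c_{ij}(t)$ is deterministic so that $\ME[\kappa_i(t)Y_i(t)]=\kappa_i(t)f(t)$, which holds in the paper's Theorem 3.1 setting where the $c_{ij}$ are deterministic functions of $t$, but would need a remark if the flow rates were random; (ii) your stated bounding interval $[\mu_i/\si_i^2-1,\,\mu_i/\si_i^2]$ for the argmax is slightly off (for $\mu_i>\si_i^2$ the argmax ranges over $[0,\mu_i/\si_i^2-1]$, which is not contained in that interval), though the conclusion that the optimal control is bounded, hence admissible, is of course correct.
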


\begin{remark}  In particular, if $\mu_i \le \si_i^2$, that is, the return on the investment does not outweigh its risks, then the $i$th bank does not borrow anything to invest. On the contrary, this bank sets aside money as cash. If $\mu_i \ge \si_i^2$, the investment is attractive for borrowing, but a high enough interest rate: $r(t) \ge \mu_i - \si_i^2$ can preclude the $i$th bank from borrowing; then this bank will invest only its own money into the portfolio. Only if the interest rate is low enough: $r(t) < \mu_i - \si_i^2$, the $i$th bank borrows money to invest. 
\end{remark}

\begin{remark}
Note that the optimal strategies~\eqref{eq:alpha-max} do not depend on the flow rates $c_{ij}$, because of the special choice of logarithmic utility function, which is linear in $Y_i$. Although logarithmic utility function leads to myopic agents, this assumption is important for mathematical tractability of the results. CRRA utility function is another popular form used in the literature, however we were unable to evaluate the optimal control for it even in the mean field case. 
\label{remark:independent-of-flows}
\end{remark}

\begin{proof} The dynamic programming principle tells us that the function 
$$
\Phi_i(t, y) := \sup_{\al_i}\ME\left[Y_i(T)\mid Y_i(t) = y\right]
$$
where we take the supremum over all $\al_i$ which are bounded and adapted on $[t,T]$, satisfies the Hamilton-Jacobi-Bellman (HJB) equation:
\begin{align}
\label{eq:HJB-private}
\begin{split}
\frac{\pa\Phi_i}{\pa t}&(t, y) +  \sup\limits_{\al_i \in \BR}\biggl[\frac12\SL_{j=1}^N\SL_{k=1}^N(1 + \al_j)a_{jk}\frac{\pa^2\Phi_i}{\pa y_j\pa y_k}(t, y) \\ & + \SL_{j=1}^N\Bigl[h_j(\al_j, r(t)) + \frac{1}{N}\SL_{k=1}^Nc_{jk}(t)(y_k - y_j)\Bigr]\frac{\pa\Phi_i}{\pa y_j}(t, y)\biggr]
 = 0,
\end{split}
\end{align}
with terminal condition $\Phi_i(T, y) = y_i$. We assume all $\al_j,\, j \ne i$, are already chosen. Try the following Anzats, linear in $y_j$:
\begin{equation}
\label{eq:anzats}
\Phi_i(t, y) = g_{i0}(t) + \SL_{j=1}^Ng_{ij}(t)y_j.
\end{equation}
Because it is linear, the second-order derivatives in~\eqref{eq:HJB-private} turn out to be zero. Therefore, the only term in~\eqref{eq:HJB-private} which needs to be maximized is $h(\al_i, r(t))$. The solution to this maximization problem is given by the value $\al_i^*$ from~\eqref{eq:alpha-max}. This is a simple algebraic exercise; detailed calculations are given in Lemma \ref{lemma:solutionAlpha} in the Appendix. 


The maximal value of $h_i(\al, r(t))$ is 
\begin{equation}
\label{eq:h-max}
h_i^*(t) := h_i(\al_i^*(t), r(t)) = 
\begin{cases}
r(t) + \frac{(\mu_i - r(t))^2}{2\si_i^2},\ \ r(t) \le \mu_i - \si_i^2;\\
\mu_i - \frac12\si_i^2,\ \ r(t) \ge \mu_i - \si_i^2 \ge 0;\\
\frac{\mu_i^2}{2\si_i^2},\ \ \mu_i \le \si_i^2.
\end{cases}
\end{equation}
This means that the $i$th bank chooses the control value $\al_i := \al_i^*$. This value is independent of terminal time $T$, and of the values of $Y_j,\, j = 1, \ldots, N$. This corresponds to the classical solution of the Merton problem. If $r$ is constant (independent of $t$), then $\al_i^*$ and $h_i^*$ are also constant. Comparing~\eqref{eq:anzats} with the terminal condition, we have:
\begin{equation}
\label{eq:terminal}
g_{ij}(T) = \de_{ij} = 
\begin{cases}
1,\, i = j;\\
0,\, i \ne j,
\end{cases}
\ \mbox{for}\ j = 0, \ldots, N.
\end{equation}
Next, plug the anzats~\eqref{eq:anzats} into~\eqref{eq:HJB-private}. Note that all second-order  derivatives of the anzats~\eqref{eq:anzats} are equal to zero, and first-order derivatives are
\begin{equation}
\label{eq:space-derivative}
\frac{\pa\Phi_i}{\pa y_j} = g_{ij}(t),\ j = 1, \ldots, N. 
\end{equation}
In addition, the time derivative of this value function $\Phi$ from~\eqref{eq:anzats} is 
\begin{equation}
\label{eq:time-derivative}
\frac{\pa\Phi}{\pa t} = g'_{i0}(t) + \SL_{j=1}^Ng'_{ij}(t)y_j.
\end{equation}
Combining~\eqref{eq:alpha-max}, ~\eqref{eq:h-max}, ~\eqref{eq:space-derivative},~\eqref{eq:time-derivative}, we get that the HJB equation~\eqref{eq:HJB-private} takes the form
\begin{equation}
\label{eq:HJB-new}
g_{i0}'(t) + \SL_{j=1}^Ng_{ij}'(t)y_j + \SL_{j=1}^Nh_j^*(t)g_{ij}(t) + \frac1{N}\SL_{j=1}^N\SL_{k=1}^Nc_{jk}(t)(y_k - y_j)g_{ij}(t) = 0.
\end{equation}
Comparing coefficients in~\eqref{eq:HJB-new} at each $y_j$, we see that 
\begin{equation}
\label{eq:ODE-1}
g_{ij}'(t) + \frac1{N}\SL_{k=1}^Ng_{ik}(t)c_{jk}(t) - \frac1{N}\SL_{k=1}^Ng_{ik}(t)c_{kj}(t) = 0,\ \ j = 1,\ldots, N.
\end{equation}
The free terms in~\eqref{eq:HJB-new} sum up to 
\begin{equation}
\label{eq:ODE-2}
g_{i0}'(t) + \SL_{j=1}^Nh^*_{j}(t)g_{ij}(t) = 0.
\end{equation}
Together with terminal conditions~\eqref{eq:terminal}, this system~\eqref{eq:ODE-1} and~\eqref{eq:ODE-2} of $N+1$ linear ODEs has a unique solution $g_{i0}, \ldots, g_{iN}$. This solves the HJB equation. 

\smallskip

To complete the proof, let us do the verification argument. Take a bounded adapted control $\al_j = (\al_j(t),\, 0 \le t \le T)$ for each $j = 1, \ldots, N$. Apply It\^o's formula for $\Phi_i(t, Y(t))$:
\begin{align}
\label{eq:ito-verification-1}
\begin{split}
\mathrm{d}\Phi_i(t, Y(t)) & =  \biggl[\frac{\pa\Phi_i}{\pa t}(t, Y(t)) + \frac12\SL_{j=1}^N\SL_{k=1}^N(1 + \al_j(t))a_{jk}\frac{\pa^2\Phi_i}{\pa y_j\pa y_k}(t, Y(t)) \\ & + \SL_{j=1}^N\Bigl[h_j(\al_j(t), r(t)) + \frac{1}{N}\SL_{k=1}^Nc_{jk}(t)(Y_k(t) - Y_j(t))\Bigr]\frac{\pa\Phi_i}{\pa y_j}(t, Y(t))\biggr]\,\mathrm{d}t \\ & + \SL_{j=1}^N\frac{\pa\Phi_i}{\pa y_j}(t, Y(t))(1 + \al_j(t))\mathrm{d}W_j(t).
\end{split}
\end{align}
Using the boundedness of $\al_j = (\al_j(t),\, 0 \le t \le T)$, we get that the stochastic integral term in~\eqref{eq:ito-verification-1} has expectation zero. Combining~\eqref{eq:HJB-private} with~\eqref{eq:ito-verification-1}, we get that  
$(\Phi_i(t, Y(t)),\, t \ge 0)$ is a supermartingale for all admissible (adapted bounded) controls $\al_i$, but a martingale for the control $\al_i^*$. Recall that $\Phi_i(T, y) = y_i$. Therefore, $\mathbb E\,\Phi_i(0, Y(0)) \ge \mathbb E\,\Phi_i(T, Y(T)) = \mathbb E\, Y_i(T)$, with equality for the control $\al_i^*$. From here it immediately follows that $\al_i^*$ is indeed the optimal control.
\end{proof}

\subsection{The dynamics of banks under their optimal investment choices} 
Under the optimal control~\eqref{eq:alpha-max}, the processes $Y_i,\, i  =1, \ldots, N$ (we shall denote them by $Y_i^*$) satisfy the following system of stochastic differential equations: 
\begin{equation}
\label{eq:new-dynamics}
\md Y_i^*(t) = \md M_i^*(t) + \frac1{N}\left[\SL_{j=1}^Nc_{ij}(t)(Y^*_j(t) - Y^*_i(t))\right]\md t,\ i = 1, \ldots, N,
\end{equation}
where $M_1^*, \ldots, M_N^*$, are given by 
$$
\md M_i^*(t) = h_i(\al_i^*(t), r(t))\md t + \si_i(1 + \al_i^*(t))\,\md W_i(t).
$$
If $r = \const$, then $M^*$ is an $N$-dimensional Brownian motion with drift vector and covariance matrix given by
\begin{equation}
\label{eq:optimal-drift}
\mu^* = (\mu_1^*, \ldots,\mu_N^*),\ \ \mu_i^* := h_i(\al_i^*, r).
\end{equation}
\begin{equation}
\label{eq:optimal-cov}
A^* := (a^*_{ij})_{i, j = 1, \ldots, N} = \diag((1 + \al_i^*)^2,\, i = 1, \ldots, N)\,A.
\end{equation}
The dynamics~\eqref{eq:new-dynamics} is similar to that in \cite{JP-Rene}. If $r(t)$ does not depend on $t$, then $M^* = (M^*_1, \ldots, M^*_N)'$ , like $(M_1, \ldots, M_N)'$, is an $N$-dimensional Brownian motion, but with different drift vector and covariance matrix. As in~\eqref{eq:mean}, we define
$$
\ol{Y}^*(t) = \frac1N\SL_{i=1}^NY^*_i(t).
$$
Averaging equations~\eqref{eq:new-dynamics} and using the symmetry property $c_{ij} = c_{ji}$, we have:
\begin{equation}
\label{eq:mean-Y}
\ol{Y}^*(t) = \frac1N\SL_{i=1}^NM_i^*(t),
\end{equation}
The interest rate $r$ controls the overall size of the system, measured by $\ol{Y}$. Express~\eqref{eq:mean-Y} as:
\begin{equation}
\label{eq:average-Y}
\md\ol{Y}^*(t) = g(r(t))\,\md t + \rho(r(t))\,\md \ol{W}(t),
\end{equation}
where $\ol{W}$ is a standard Brownian motion, and the coefficients $g(\cdot)$ and $\rho(\cdot)$ are defined as:
\begin{equation}
\label{eq:new-drift-g}
g(r) := \frac1N\SL_{i=1}^Ng_i(r),\ \ g_i(r) := 
\begin{cases}
\frac{(\mu_i - r)^2}{2\si_i^2} + r,\ \ r \le \mu_i - \si_i^2;\\
\mu_i - \frac{\si_i^2}2,\ \ r \ge \mu_i - \si_i^2;\\
\frac{\mu_i^2}{2\si_i^2},\ \ \mu_i < \si_i^2.
\end{cases}
\end{equation}
\begin{equation}
\label{eq:new-diffusion}
\rho^2(r) := \frac1{N^2}\SL_{i=1}^N\SL_{j=1}^Na_{ij}\rho_i(r)\rho_j(r),\ \ \rho_i(r) := 
\begin{cases}
\frac{\mu_i - r}{\si_i^2},\ \ 0 \le r \le \mu_i - \si_i^2;\\
1,\ \ 0 \le \mu_i - \si_i^2 \le r;\\
\frac{\mu_i}{\si_i^2},\ \mu_i \le \si_i^2.
\end{cases}
\end{equation}

    \begin{figure*}[htbp]
        \centering
        \subfloat[$r=0$]{\includegraphics[width = 4.5cm]{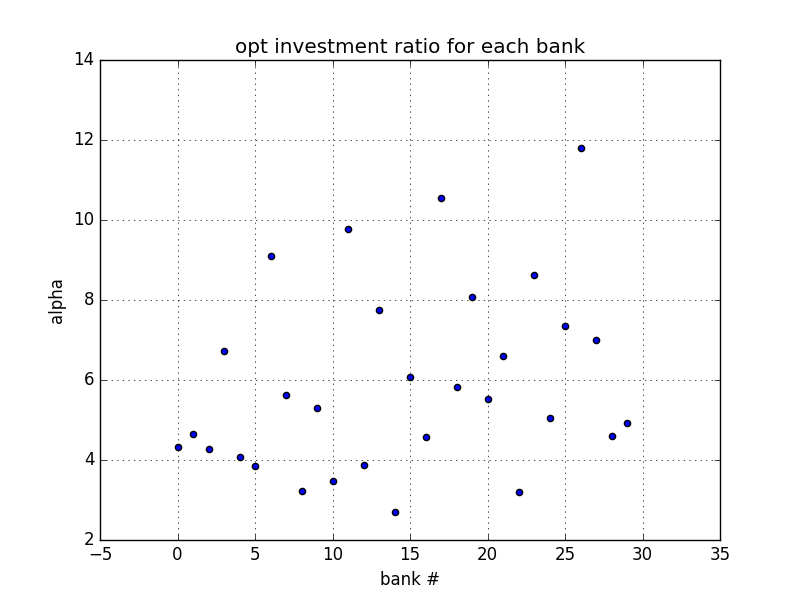}}\label{c0_r0_corrl0}
        \subfloat[$r=0.12$]{\includegraphics[width = 4.5cm]{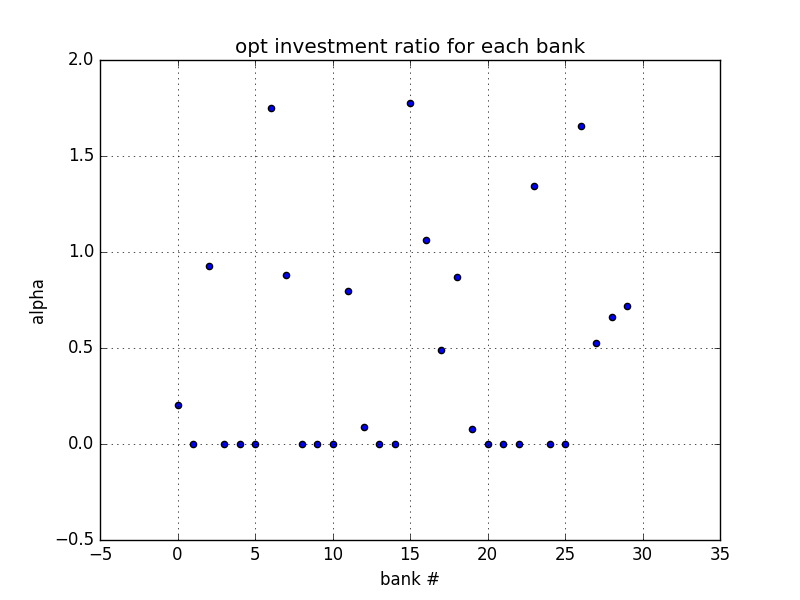}}\label{c0_r12_corrl0}
        \subfloat[$r=0.20$]{\includegraphics[width = 4.5cm]{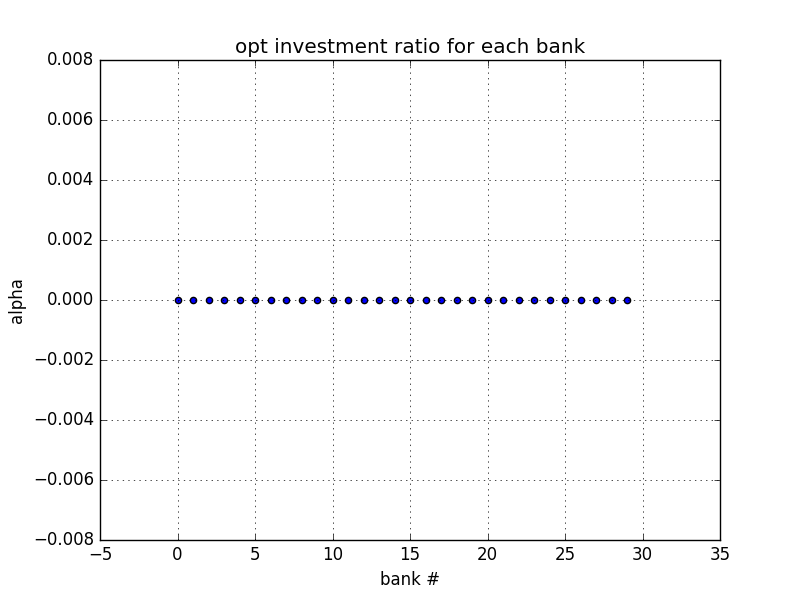}}\label{c0_r20_corrl0}
          \caption[ ]
        {\small \textit{We use the following parameters for the simulations: $N=30$ bank, time horizon $T =1$, no correlation $\rho_{0} = 0$, no interbank flows $c_{i,j}=0$, 1000 time steps and $\mu_i, \si_i,\, i = 1, \ldots, N$ i.i.d. uniform $[0.1, 0.2]$  } } 
        \label{c0_corrl0}
    \end{figure*}

To illustrate the optimal choice of the investment ratio $\al_i^* = \al_i,\, i = 1, \ldots, N$, we made some simulations. Take $N = 30$ banks, with $\mu_i, \si_i,\, i = 1, \ldots, N$ i.i.d. uniform $[0.1, 0.2]$. Then $\mu_i \ge \si_i^2$ for all $i$; that is, all portfolios are profitable to invest, at least for zero interest rate $r = 0$. First, in Figure~\ref{c0_corrl0} we assume~\eqref{eq:diagonal-A}, that is, the portfolio processes $S_1, \ldots, S_N$, are independent. We also assume that there are no flows: 
$$
c_{ij}(t) \equiv 0,\ i, j = 1, \ldots, N.
$$
We take three interest rates $r$: $0\%,\,12\%$, and $20\%$ respectively. As expected, increasing the interest rate forces the banks to borrow less and thus optimal investment ratio $\alpha^*$ becomes 0 in Figure~\ref{c0_corrl0}(C) while it varied between $2$ to $12$ in Figure~\ref{c0_corrl0}(A). 

\smallskip

    \begin{figure*}[htbp]
        \centering
        \subfloat[All the banks]{\includegraphics[width=4cm]{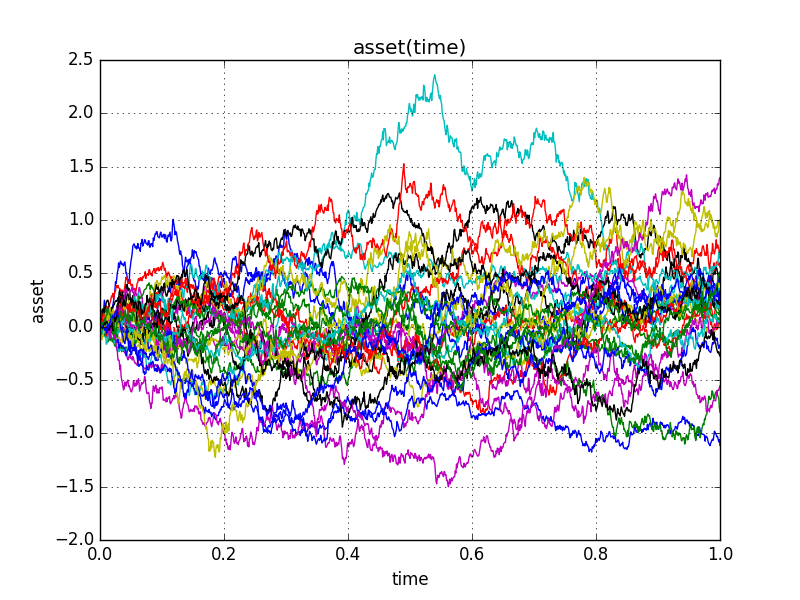}}
        \subfloat[$i = 1, \ldots, 10$]{\includegraphics[width=4cm]{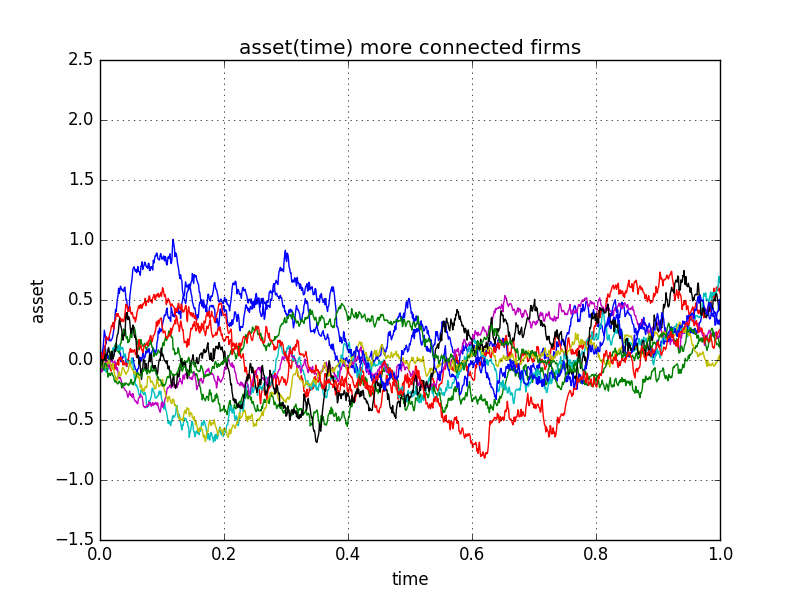}}
        \subfloat[$j = 11, \ldots, 30$]{\includegraphics[width=4cm]{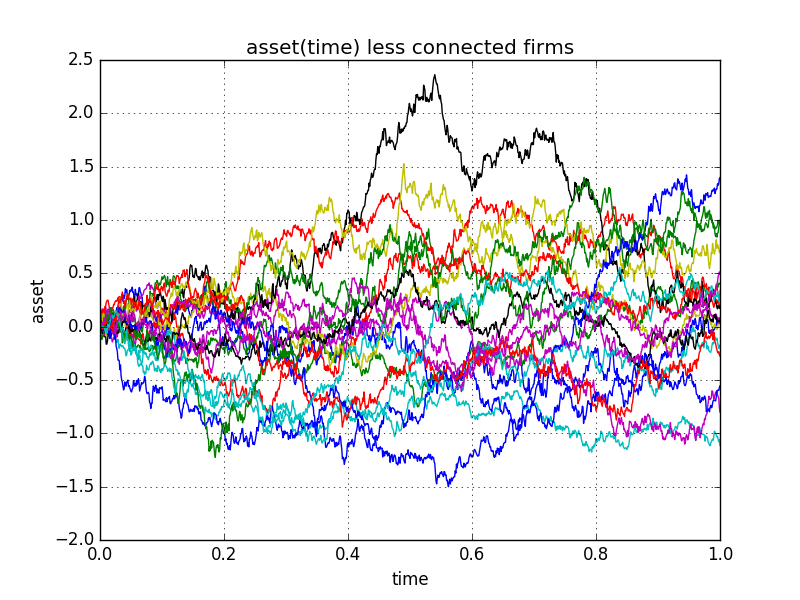}}
        \caption[ ]
        {\small \textit{Evolution of the logarithmic capital $Y_i(t)$ of banks, $i = 1, \ldots, N$. We use the following parameters: interest rate $r=0$, $N=30$ banks, time horizon $T =1$, no correlation: $\rho_{0} = 0$, interbank flows $c_{i,j}$ are as in equation~\eqref{eq:example-flows} , $1000$ time steps, and $\mu_i, \si_i,\, i = 1, \ldots, N$ are i.i.d. uniform on $[0.1, 0.2]$  } } 
        \label{nonZeroflows_r0_rho0}
    \end{figure*}

    \begin{figure*}[htbp]
        \centering
         \subfloat[All the banks]{\includegraphics[width=4cm]{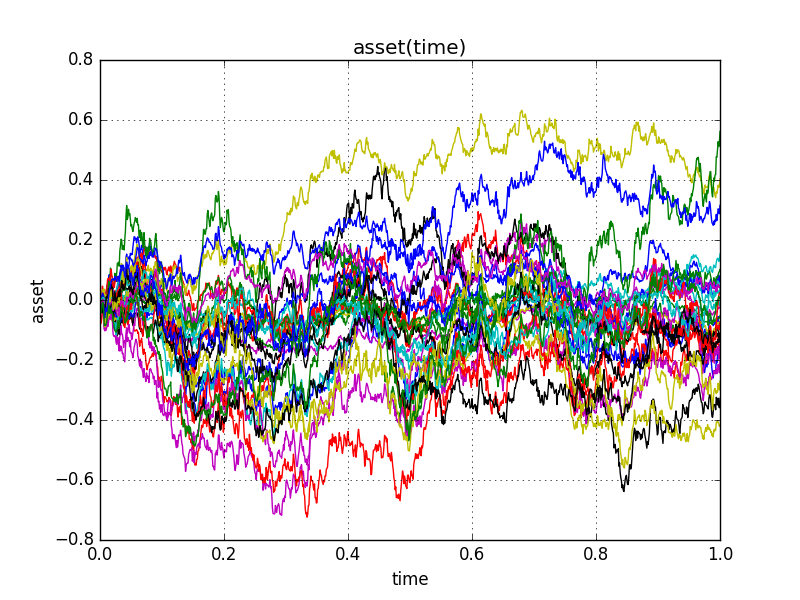}}
         \subfloat[$i = 1, \ldots, 10$]{\includegraphics[width=4cm]{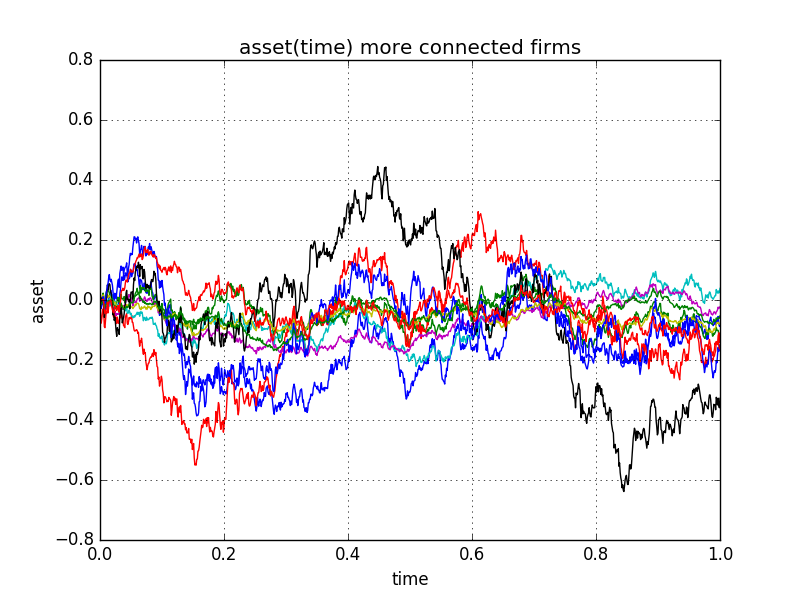}}
                  \subfloat[$i = 11, \ldots, 30$]{\includegraphics[width=4cm]{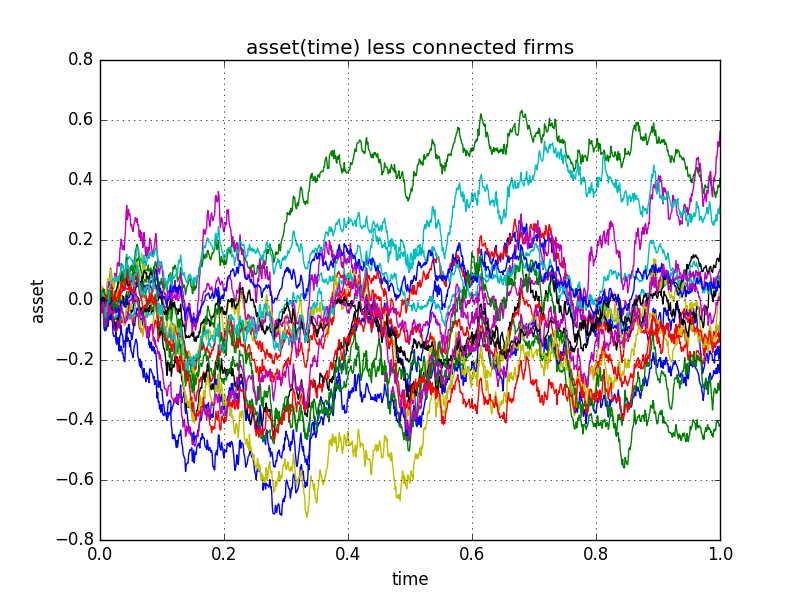}}
        \caption[ ]
        {\small \textit{Evolution of the logarithmic capital of banks $Y_i(t)$.We use the following parameters: interest rate $r=8\%$, $N=30$ banks, time horizon $T =1$, correlation coefficient $\rho_{0} = 0.5$, interbank flows $c_{i,j}$ as in equation~\eqref{eq:example-flows} , $1000$ time steps and $\mu_i, \si_i,\, i = 1, \ldots, N$ i.i.d. uniform $[0.1, 0.2]$} } 
        \label{nonZeroflows_r08_rho50}
    \end{figure*}
    
    Next, in Figure~\ref{nonZeroflows_r0_rho0} we assume that the portfolio processes are independent, as in~\eqref{eq:diagonal-A}, but there are flows: 
\begin{equation}
\label{eq:example-flows}
c_{ij} = 
\begin{cases}
10,\ i, j = 1, \ldots, 10;\\
0.5,\ \mbox{else}.
\end{cases}
\end{equation}
We observe that the banks with significant flows tend to have wealth dynamics which are more ``tied" together. Moreover, this adds to the stability to the system as we observe lesser defaults for $i=1,\ldots,10$ compared to $j=11,\ldots,30$.
    
    \smallskip

Finally, in Figure~\ref{nonZeroflows_r08_rho50} we assume that the portfolio processes are correlated, as in~\eqref{eq:correlated-assets}, with $\rho_0 := 0.5$, flows are given by~\eqref{eq:example-flows} and interest rate $r$ is $8\%$. Compared to Figure~\ref{nonZeroflows_r0_rho0}, the impact of correlation on the dynamics of the banks is clearly visible through movement of the wealth  dynamics strongly tied together.  

\subsection{Systemic risk} Much of current research is devoted to {\it systemic risk}, that is, the probability of multiple bank defaults, and propagation of defaults through the system (in other words, contagion). To illustrate the probability of default of banks under different scenarios, we present the histogram and the empirical cumulative distribution function for number of defaults with $N=100$ banks and $1000$ simulations. We assume the default threshold $D=-1$ in logarithmic wealth. That is, firms default if $Y_i(t)<D$ for some $t \in [0,T]$. Denote the (random) number $\mathcal D$ of defaults:
\begin{equation}
\label{eq:number-of-defaults}
\mathcal D := \SL_{i=1}^N1\left(\min\limits_{0 \le t \le T}Y_i(t) < D\right),
\end{equation}
 First, in Figure~\ref{histogram_noflows_corrl0} we assume no interbank flows and  independent portfolio process under different interest rate scenarios:

    \begin{figure*}[htbp]
        \centering
        \subfloat[$r=0$]{\includegraphics[width=4cm]{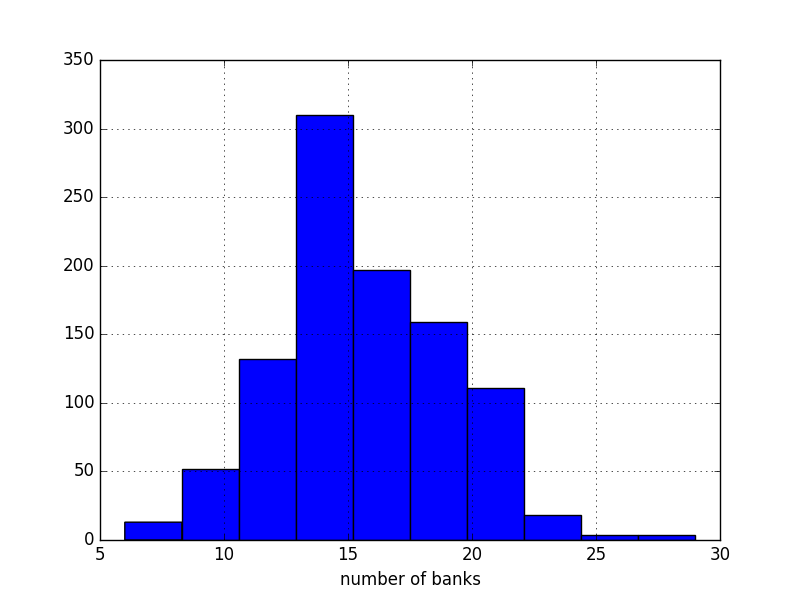}}
        \subfloat[$r=0.05$]{\includegraphics[width=4cm]{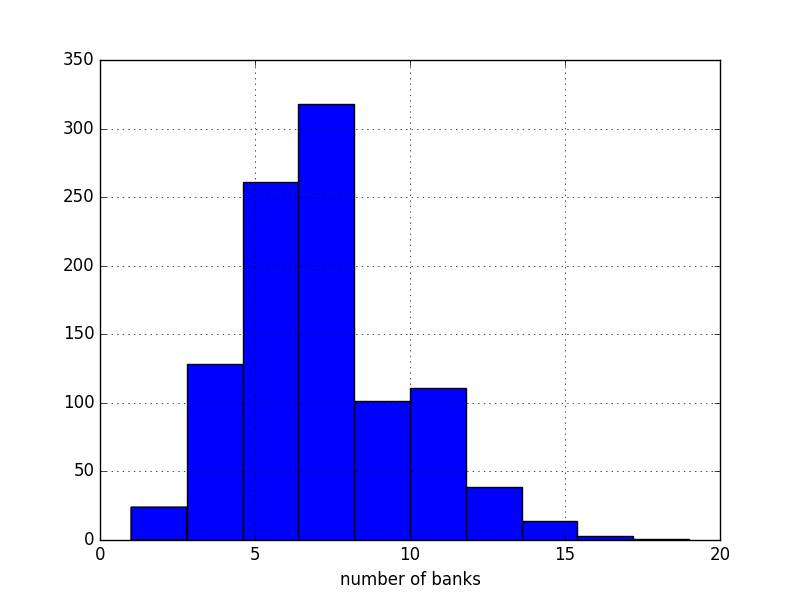}}
        \subfloat[$r=0.08$]{\includegraphics[width=4cm]{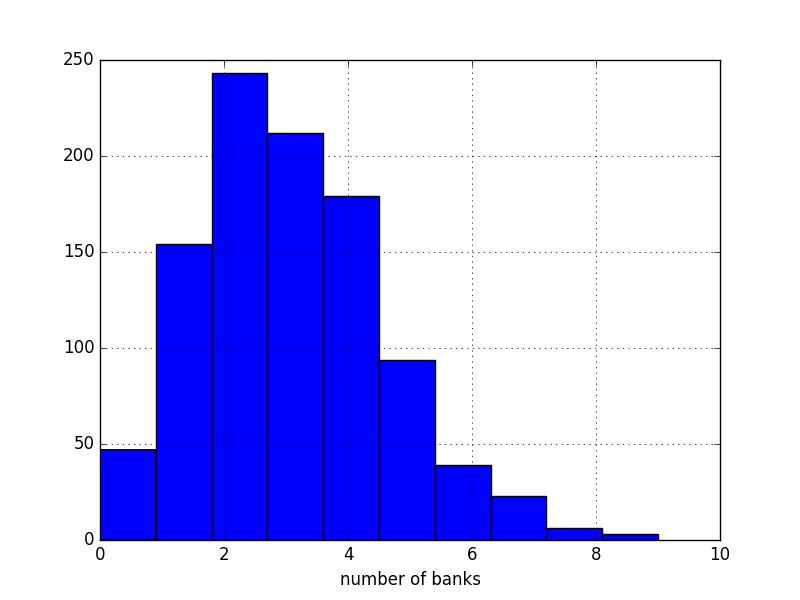}}
        \caption[ ]
        {\small \textit{Number of banks in default, whose log capital $Y_i(t)$ at some time $t \in [0,T]$ goes below $D = -1$. We use the following parameters: $N=100$ banks, $1000$ simulations, no correlation: $a_{ij}=\sigma_{i}^2\delta_{ij}$, no interbank flows: $c_{ij}=0$ for $i, j = 1, \ldots, N$, $100$ time steps, and $\mu_i, \si_i,\, i = 1, \ldots, N$ are i.i.d. uniform on $[0.1, 0.2]$ } } 
        \label{histogram_noflows_corrl0}
    \end{figure*}

    \begin{figure*}[htbp]
        \centering
            \includegraphics[width=0.4\textwidth]{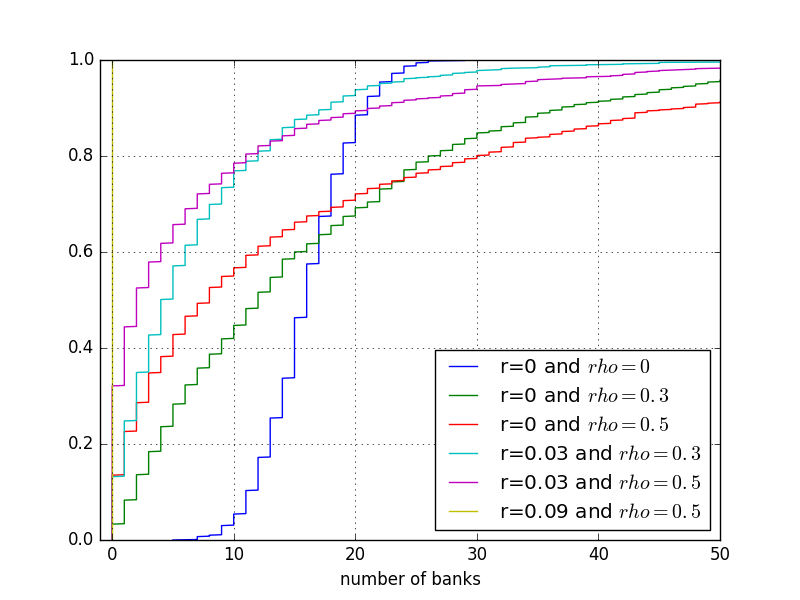}
            \caption[]%
            {\small \textit{ Empirical CDF of $\mathfrak D$, the number of banks in default, with $N=100$ banks, $1000$ simulations, $\mu_i=\sigma_i=0.1$ for $i=1,\ldots,N$}}    
            \label{ecdf_c0}
    \end{figure*}

\smallskip

Next, in Figure~\ref{ecdf_c0} we present empirical cumulative distribution function (CDF) for the number of defaulted banks assuming correlated portfolios and no interbank flows for different interest rates. The corresponding histogram is presented in figure \ref{hist_corrlAssets_noflows}. As stated in previous studies, increase in correlation increases the probability of large defaults and at the same time reducing the small default probabilities, similar to flocking behaviour in various biological studies. So, in Figure~ \ref{defaultProb} we present empirical estimates of $\MP(\mathcal D > 60)$ and $\MP(\mathcal D < 5)$, for $N = 100$ banks, as a function of the correlation between their portfolio process at different interest rates. As expected, we observe the increase in probability of large and small default as the correlation increases. However, increasing the interest rate reduces the probability of large default at the expense of small default probability. Finally, in Figure~\ref{ecdf_cconstant} we present the empirical CDF for the number of defaulted banks assuming correlated portfolio process and constant interbank flows $c_{ij}=a$ for $i, j = 1, \ldots, N$, where $a \in \{0,0.5,1 \}$. We observe that interbank flows help stabilize the system and reduce the probability of default.

    \begin{figure*}[htbp]
        \centering
        \subfloat[$r=0$ and $\rho_0=0$]{\includegraphics[width=6.5cm]{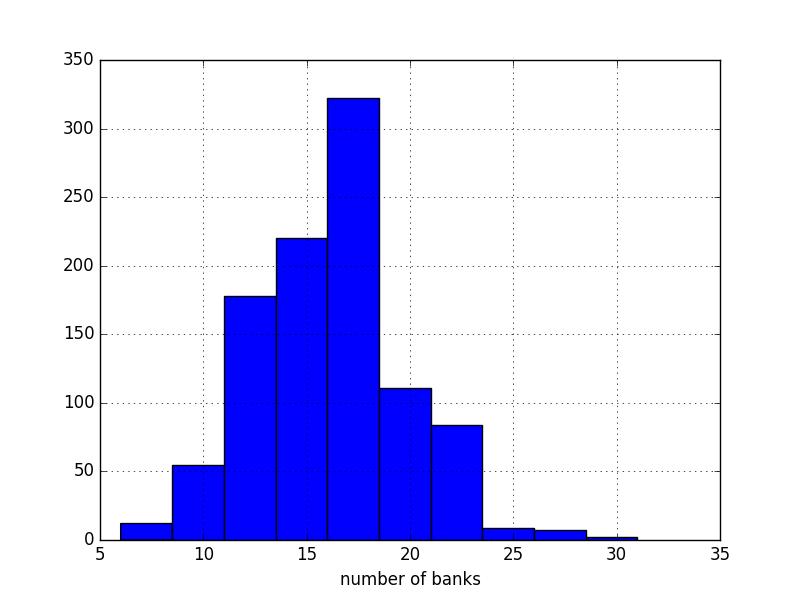}}
        \subfloat[$r=0$ and $\rho_0=0.5$]{\includegraphics[width=6.5cm]{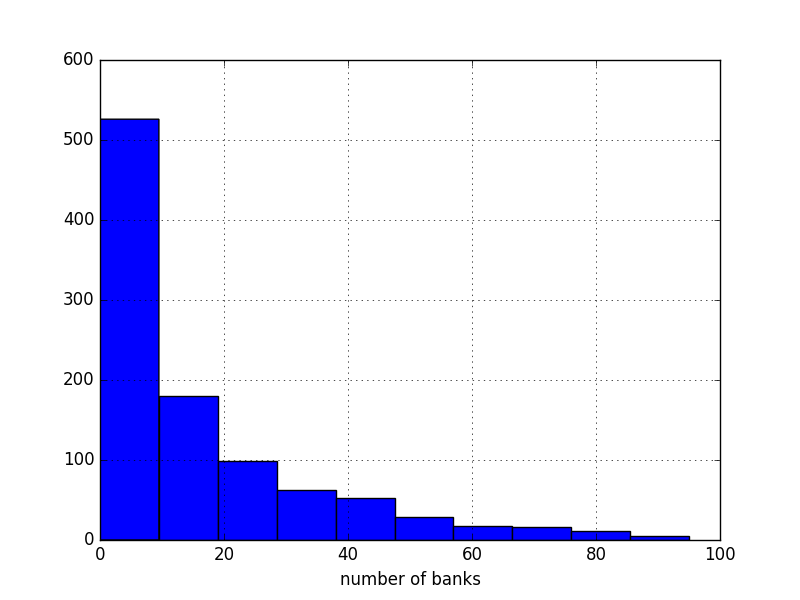}}
                \vskip\baselineskip
        \subfloat[$r=0.03$ and $\rho_0=0.3$]{\includegraphics[width=6.5cm]{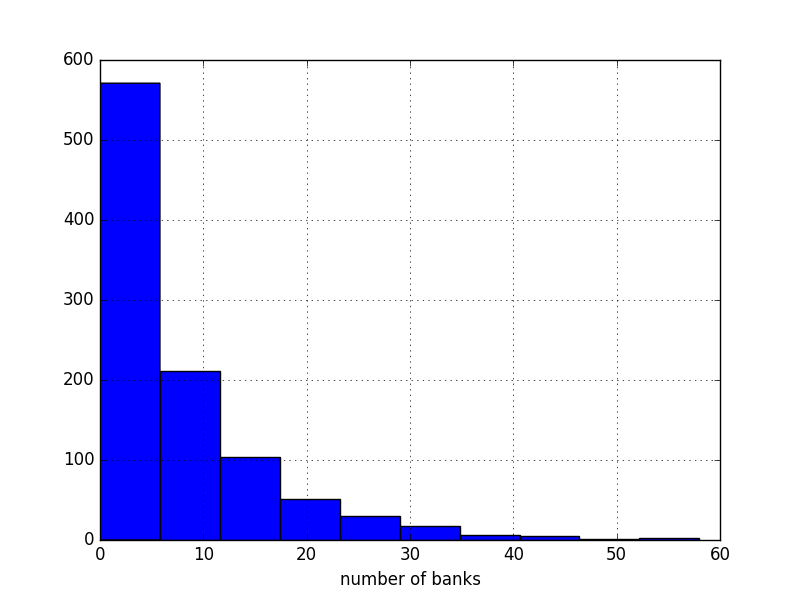}}
                \subfloat[$r=0.05$ and $\rho_0=0.3$]{\includegraphics[width=6.5cm]{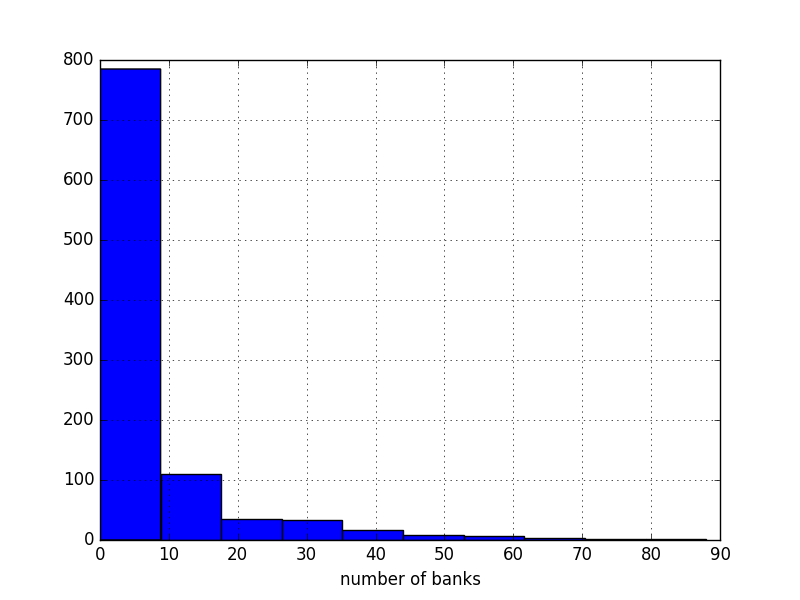}}
        \caption[ ]
        {\small \textit{Histogram of the number of banks defaulting. We use the following parameters: $N=100$ banks, $1000$ simulations, $\mu_i=\sigma_i=0.1$ for $i=1,\ldots,N$, and no interbank flows: $c_{ij}=0$ for $i, j = 1, \ldots, N$} } 
        \label{hist_corrlAssets_noflows}
    \end{figure*}

    \begin{figure*}[htbp]
        \centering
        \subfloat[Probability of a large default: $\mathcal D > 60$]{\includegraphics[width=6.5cm]{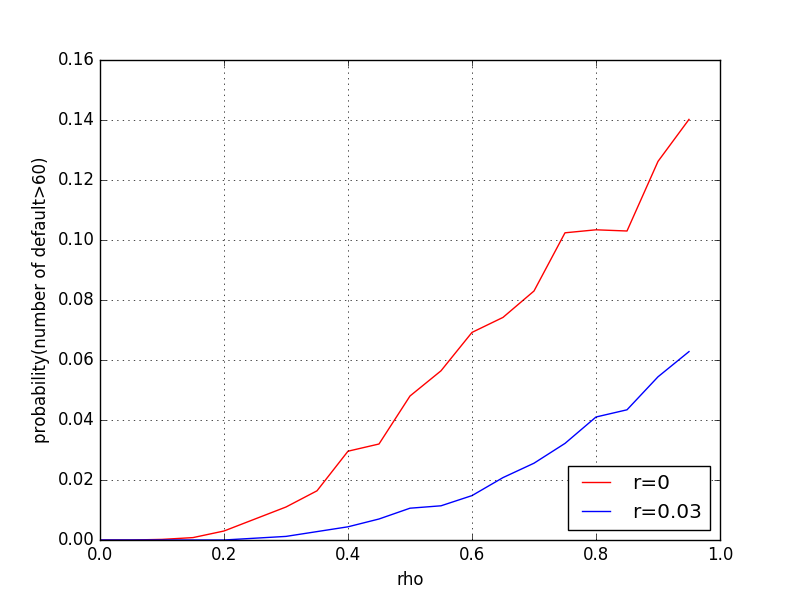}}
        \subfloat[Probability of a small default: $\mathcal D < 5$]{\includegraphics[width=6.5cm]{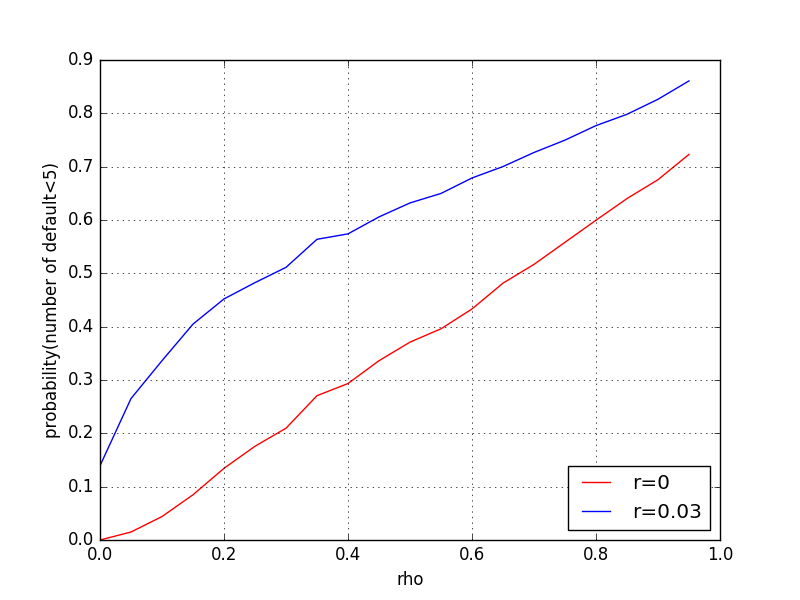}}
        \caption[]%
        {\small \textit{Empirical estimates of the probabilities of large and small defaults: $\MP(\mathcal D > 60)$ and $\MP(\mathcal D < 5)$, respectively, as a function of correlation between portfolio process $\rho_0$ at different interest rates. We use the following parameters: $N=100$ banks, $5000$ simulations, $\mu_i=\sigma_i=0.1$ for $i=1,\ldots,N$, and interbank flow rates $c_{ij}=0$ for $i, j = 1, \ldots, N$}}    
         \label{defaultProb}
    \end{figure*}

    \begin{figure*}[htbp]
        \centering
        \subfloat[$\rho_0=0.5$ and $r=0$]{\includegraphics[width=6.5cm]{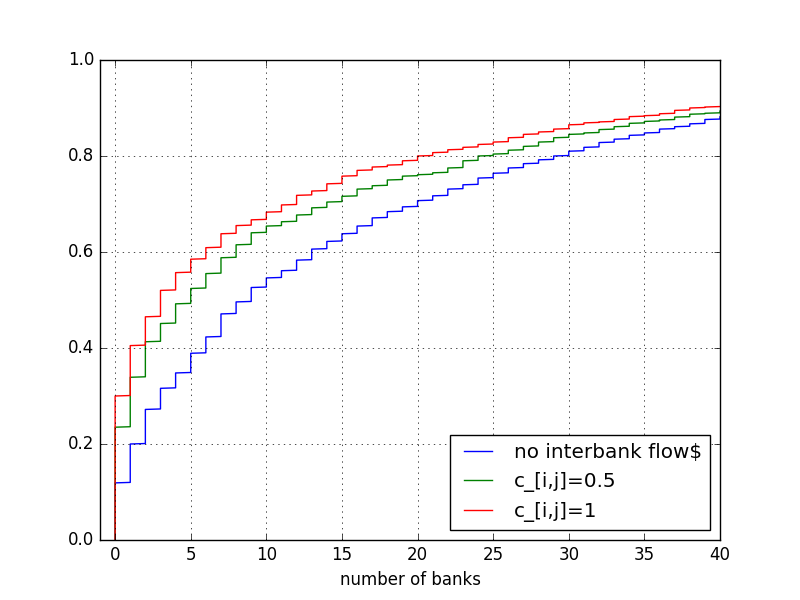}}
        \subfloat[$\rho_0=0.5$ and $r=0.03$]{\includegraphics[width=6.5cm]{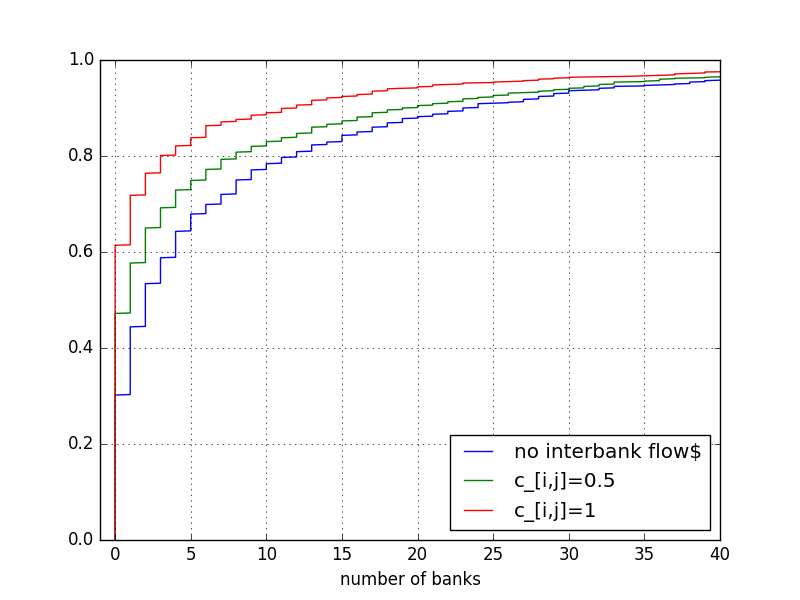}}
        \caption[]%
        {\small \textit{Empirical CDF of the number $\mathcal D$ of banks in default. We use the following parameters: $N=100$ banks, $1000$ simulations, $\mu_i=\sigma_i=0.1$ for $i=1,\ldots,N$, and constant interbank flows $c_{ij}=a$ for $i, j = 1, \ldots, N$, where $a \in \{0,0.5,1 \}$} }         
         \label{ecdf_cconstant}
    \end{figure*}

\section{Optimal central bank policy} In this section, we assume that the central bank has to choose the interest rate $r$ in an optimal way, so that, after banks make their choice as in the previous section, optimal policy choice is achieved. We assume that banks make optimal (for them) choices and we omit all asterisks from notation of processes. This can be thought of as a principal's problem within the principal-agent problem framework. Let us now revisit the description of policymaking by the central bank. 

\smallskip

Its tool is the {\it interest rate} $r$, which the central bank uses to control the overall amount of capital in the system, measured by the $\ol{Y}^*$ from~\eqref{eq:average-Y}. If the interest rate is low, the growth rate $g(r)$ from~\eqref{eq:new-drift-g} and the volatility $\rho^2(r)$ from~\eqref{eq:new-diffusion} are large. A more risk-averse central bank can choose therefore a larger $r$. One can apply a concave utility function to $\ol{Y}^*(t)$,  and solve the stochastic control problem for this $r$. In this section, we apply the exponential (CARA: constant relative risk aversion) utility function to $\ol{Y}^*(t)$. 

\smallskip

The private banks wish to maximize their expected logarithmic net worth $Y_i(t)$. In other words, they have logarithmic utility $U(x) := \log x$, which shows their aversion to risk. However, {\it in terms of logarithmic capital}, their utility function is linear. Now, if the central bank was as risk-averse as private banks, she would also try to maximize 
$$
\ME(Y_1(T) + \ldots + Y_N(T)),\ \ \mbox{or, alternatively,}\ \  \ME\ol{Y}(T),
$$ 
for a time horizon $T > 0$. Below, we show that the central bank would then choose zero interest rate $r = 0$, because this would produce the same result as the private banks were aiming for. 

\smallskip

Now, suppose the central bank is even more risk-averse than private banks. This should manifest itself in the utility function being concave (rather than linear) even in logarithmic terms. Consider, for example, a commonly used exponential (CARA) utility function:
\begin{equation}
\label{eq:CARA-1}
U_{\la}(y) := -e^{-\la y}.
\end{equation}
Assume the central bank maximizes expected terminal utility:
\begin{equation}
\label{eq:central}
 \sup\limits_r\ME U_{\la}(\ol{Y}(T)),
\end{equation}
where the supremum in~\eqref{eq:central} is chosen over all bounded adapted controls $r$. We can alternatively choose instead of~\eqref{eq:CARA-1} the utility function as
\begin{equation}
\label{eq:CARA-2}
U_{\la}(y) = \frac1{\la}\left(1 - e^{-\la y}\right).
\end{equation}
There is no difference between~\eqref{eq:CARA-1} and~\eqref{eq:CARA-2} when we try to maximize~\eqref{eq:central}, but writing~\eqref{eq:CARA-2} highlights the risk-aversion of the central bank. As $\la \downarrow 0$, the function $U_{\la}$ from~\eqref{eq:CARA-2} satisfies:
$$
U_{\la}(y) \to y.
$$
The commonly used {\it absolute risk aversion} is calculated for~\eqref{eq:CARA-2} as follows:
$$
-\frac{U''_{\la}(y)}{U'_{\la}(y)} = \la.
$$
In other words, $\la > 0$ is the coefficient of risk aversion (of the central bank relative to private banks). For $\la = 0$ the central bank is not risk-averse at all (at least not more  than private banks). 

\smallskip

\begin{theorem} An optimal interest rate $r(t)$ for the problem~\eqref{eq:central} is given by a constant $r = r^*$ which maximizes the following expression:
\begin{equation}
\label{eq:optimal-r}
w(r,\lambda) :=  g(r) - \frac{\la}2\rho^2(r). 
\end{equation}
\label{thm:optimal-r}
\end{theorem}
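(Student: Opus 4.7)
The plan is to set up the Hamilton–Jacobi–Bellman equation for the aggregate dynamics~\eqref{eq:average-Y} of $\ol{Y}^*$, exploit the CARA structure via an exponential separation-of-variables ansatz, and finish with a verification argument analogous to the one at the end of the proof of Theorem~\ref{thm:agent}.

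First I would introduce the value function
$$
V(t,y) := \sup_r \EE\bigl[U_\la(\ol{Y}^*(T)) \mid \ol{Y}^*(t) = y\bigr],
$$
with the supremum taken over bounded adapted controls $r$ on $[t,T]$. Because~\eqref{eq:average-Y} is a one-dimensional controlled diffusion, the associated HJB equation reads
$$
\frac{\pa V}{\pa t}(t,y) + \sup_{r\ge 0}\left[g(r)\frac{\pa V}{\pa y}(t,y) + \tfrac{1}{2}\rho^2(r)\frac{\pa^2 V}{\pa y^2}(t,y)\right] = 0,
\qquad V(T,y) = -e^{-\la y}.
$$

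Next I would try the separable ansatz $V(t,y) = -\psi(t) e^{-\la y}$ with $\psi(T) = 1$. Computing $V_t = -\psi'(t)e^{-\la y}$, $V_y = \la\psi(t)e^{-\la y}$, $V_{yy} = -\la^2\psi(t)e^{-\la y}$, and factoring out the common $e^{-\la y}$, the HJB collapses to
$$
\psi'(t) \;=\; \la\,\psi(t)\,\sup_{r\ge 0}\left[g(r) - \frac{\la}{2}\rho^2(r)\right] \;=\; \la\,\psi(t)\, w(r^*,\la),
$$
where $r^*$ is any pointwise maximizer of~\eqref{eq:optimal-r}. The key point is that this supremum is independent of both $t$ and $y$, so the pointwise optimizer is a \emph{constant} in time; then $\psi(t) = \exp(\la w(r^*,\la)(T-t)) > 0$ is consistent with the ansatz, giving $V(t,y) = -\exp(-\la y - \la w(r^*,\la)(T-t))$.

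The final step is verification: apply It\^o's formula to $V(t,\ol{Y}^*(t))$ under an arbitrary bounded adapted $r$. The drift term is pointwise non-positive by the HJB inequality and vanishes identically when $r \equiv r^*$; the local-martingale part has expectation zero after localization and a uniform-integrability argument that uses boundedness of $r$ (so $g(r)$ and $\rho(r)$ are bounded) together with the explicit form of $V$. This yields that $V(t,\ol{Y}^*(t))$ is a supermartingale and becomes a martingale under $r\equiv r^*$, so $\EE[U_\la(\ol{Y}^*(T))]\le V(0,\ol{Y}^*(0))$ with equality when $r=r^*$, establishing optimality.

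The main obstacle is making sure the supremum in~\eqref{eq:optimal-r} is actually attained. From the piecewise definitions~\eqref{eq:new-drift-g}–\eqref{eq:new-diffusion}, both $g(r)$ and $\rho^2(r)$ are continuous in $r\ge 0$ and become constant once $r \ge \max_i (\mu_i-\si_i^2)_+$; hence $w(\cdot,\la)$ is continuous on $[0,\infty)$ and eventually constant, so its supremum is attained on the compact interval $[0,\max_i(\mu_i-\si_i^2)_+]$. A secondary technical point is the integrability required to discard the stochastic-integral term in the verification step, which is handled by a standard localization argument exploiting boundedness of the admissible $r$ and the explicit exponential form of $V$.
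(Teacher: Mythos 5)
Your proof is correct and follows essentially the same route as the paper: the same HJB equation, the same separable ansatz $-\psi(t)e^{-\lambda y}$ (the paper writes $f(t)U_\lambda(y)$, identical up to notation), the same reduction to a scalar ODE whose coefficient involves $k_0 := \sup_{r\ge 0} w(r,\lambda)$, and the same It\^o/supermartingale verification. Two small notes: solving $\psi'(t) = \lambda k_0\,\psi(t)$, $\psi(T)=1$, yields $\psi(t)=e^{\lambda k_0(t-T)}$ rather than $e^{\lambda k_0(T-t)}$ as you wrote -- your final displayed formula for $V(t,y)$ is nonetheless correct, so this is a transcription slip, not a flaw -- and your observation that the maximizer of $w(\cdot,\lambda)$ exists because $g$ and $\rho^2$ are continuous and eventually constant is a genuinely useful point the paper leaves implicit.
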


\begin{remark}
It is interesting to note that the optimal interest rate $r$ does not depend on the flow rates $c_{ij}$. This is because we measure the size of the system by the stochastic process $\ol{Y}(t)$. This process satisfies a stochastic differential equation with coefficients independent of $c_{ij}$. These coefficients do depend on the optimal controls $\al_i^*$. However, as we mentioned in Remark~\ref{remark:independent-of-flows}, these optimal controls $\al_i^*$, in turn, do not depend on the flow rates, because of our special choice of logarithmic utility function. 
\end{remark}

\begin{proof} The HJB equation for the function 
$$
\Phi(t, y) := \sup\limits_r\ME\left[U_{\la}(\ol{Y}(T))\mid \ol{Y}(t) = y\right]
$$
where the supremum is taken over all bounded adapted controls $r = (r(t),\, 0 \le t \le T)$, takes the form 
\begin{equation}
\label{eq:HJB}
\frac{\pa\Phi}{\pa t} + \sup\limits_{r \ge 0}\left[\frac{\pa\Phi}{\pa y}g(r) + \frac12\frac{\pa^2\Phi}{\pa y^2}\rho^2(r)\right] = 0,
\end{equation}
with terminal condition $\Phi(T, y) = U_{\la}(y)$. Try the following form: 
\begin{equation}
\label{eq:product-form}
\Phi(t, y) = f(t)U_{\la}(y).
\end{equation}
From~\eqref{eq:product-form}, we can calculate derivatives with respect to $t$ and $y$:
\begin{equation}
\label{eq:derivatives-1}
\frac{\pa\Phi}{\pa t} = f'(t)U_{\la}(y),\ \frac{\pa\Phi}{\pa y} = -\la \Phi,\ \frac{\pa^2\Phi}{\pa y^2} = \la^2\Phi.
\end{equation}
Plug~\eqref{eq:derivatives-1} into~\eqref{eq:HJB}. Because $\Phi < 0$, we can rewrite~\eqref{eq:HJB} as
$$
f'(t) + f(t)\cdot\inf\limits_{r \ge 0}\left[-\la g(r) + \frac{\la^2}2\rho^2(r)\right] = 0.
$$
This, in turn, is equivalent to 
\begin{equation}
\label{eq:transformed-HJB}
\frac{f'(t)}{\la f(t)} = \sup\limits_{r \ge 0}\left[g(r) - \frac{\la}2\rho^2(r)\right] =: k_0.
\end{equation}
Since we have $\Phi(T, y) < 0$ and $U_{\la}(y) < 0$, for compatibility we need to show that $f(t) > 0$ for all $t$. From the terminal condition $\Phi(T, y) = U_{\la}(y)$ combined with~\eqref{eq:product-form}, we have: $f(T) = 1$. The equation~\eqref{eq:transformed-HJB} can be written as $f'(t) = \la k_0f(t)$, which gives us $f(t) = \exp\left(\la k_0(t-T)\right)$. Therefore, $f(t)$ is positive. 

\smallskip

Finally, let us do the verification argument to complete the proof. The idea is similar to the verification argument in Theorem \ref{thm:agent}. Assume $r_* = (r_*(t),\, 0 \le t \le T)$ is our constant control from~\eqref{eq:optimal-r}, found from~\eqref{eq:HJB}, and $r = (r(t),\, 0 \le t \le T)$ is some other admissible (adapted bounded) control. Apply the function $\Phi(t, \cdot)$ to the process $\ol{Y}$. By It\^o's formula, 
\begin{align}
\label{eq:ito-second-verification}
\begin{split}
\mathrm{d}\Phi(t, \ol{Y}(t))  = \biggl[\frac{\pa\Phi}{\pa t}(t, \overline{Y}(t)) & + \frac{\pa\Phi}{\pa y}(t, \overline{Y}(t))g(r(t)) + \frac12\frac{\pa^2\Phi}{\pa y^2}(t, \overline{Y}(t))\rho^2(r(t))\biggr]\,\mathrm{d}t \\ & + \frac{\pa\Phi}{\pa y}(t, \overline{Y}(t))\rho(r(t))\,\mathrm{d}\overline{W}(t).
\end{split}
\end{align}
Comparing~\eqref{eq:HJB} with~\eqref{eq:ito-second-verification}, we get that $\Phi(t, \ol{Y}(t))$ is a supermartingale for the control $r$, but a martingale for the control $r_*$. Indeed, by boundedness of $r(t)$, the expectation of the stochastic integral in~\eqref{eq:ito-second-verification} is zero. Since $\Phi(T, y) = U_{\la}(y)$, we get: $\mathbb E\, U_{\la}(\ol{Y}(T)) = \mathbb E\, \Phi(T, \ol{Y}(T)) \le \mathbb E\, \Phi(0, \ol{Y}(0)) $, with equality for the control $r_*$. The result immediately follows from here. 
\end{proof}

Let us find the $r$ which corresponds to the maximum in the right-hand side of~\eqref{eq:transformed-HJB}. This depends on the structure of the vector $g$ and the matrix $A$.

\smallskip

If $\mu_i \le \si_i^2$ for all $i = 1, \ldots, N$, then all investments are too unprofitable to borrow money for them. Then the interest rate policy cannot influence the behavior of private banks. This corresponds to the case of the {\it liquidity trap}, when conventional monetary policy no longer works. From now on until the end of this section, let us assume that all investments are attractive:
$$
\mu_i \ge \si_i^2,\ \ i = 1, \ldots, N.
$$

\smallskip

{\bf (3.a)} Assume $S_1 = \ldots = S_N$: all investments are the same. Then we have:
$$
g_1 = \ldots = g_n =: g,\ \mbox{and}\ \si_1 = \ldots = \si_N =: \si;
$$
$$
g(r) - \frac{\la}2\rho^2(r) = 
\begin{cases}
\frac{(\mu - r)^2}{2\si^2}(1 - \la),\ r \le \mu - \si^2;\\
\mu - \frac{\si^2}2(1 + \la),\  r \ge \mu - \si^2.
\end{cases}
$$
The maximum is attained at $r = 0$ for 
$$
\la < \la_* := 1 - 2\left(\frac{\mu}{\si^2} + 1\right)^{-1},
$$
and at any $r \ge \mu - \si^2$ for $\la > \la_*$. This has the following meaning: the case $\la < \la_*$ corresponds to less risk-averse central bank, and in order to increase the total quantity of capital in the system, it wishes to slash the interest rate to zero. For the case $\la > \la_*$, however, the central bank is very risk-averse, and it increases the interest rate to prevent excessive borrowing and overheating of the financial system. 

\smallskip

{\bf (3.b)} Independent portfolio process: $a_{ij} = \si_i^2\de_{ij}$. Then 
$$
g(r) - \frac{\la}2\rho^2(r) = \frac1N\SL_{i=1}^N\left[g_i(r) - \frac{\la}{2N}\si_i^2\rho_i^2(r)\right].
$$
This function attains maximum:
$$
\mbox{at}\ \ r = 0\ \ \mbox{for}\ \ \la < \la_{\min} := N\min\limits_{i = 1, \ldots, N}\left[1 - 2\left(\frac{\mu_i}{\si^2_i} + 1\right)^{-1}\right],
$$
$$
\mbox{at}\ \ r = \max\limits_{i = 1, \ldots, N}\left[\mu_i - \si_i^2\right]\ \ \mbox{for}\ \ \la > \la_{\max} := N\max\limits_{i = 1, \ldots, N}\left[1 - 2\left(\frac{\mu_i}{\si^2_i} + 1\right)^{-1}\right].
$$
In the general case, we do not have an explicit form for the optimal $r$ in case $\la \in [\la_{\min}, \la_{\max}]$. If $\mu_1 = \ldots = \mu_N = \mu$ and $\si_1 = \ldots = \si_N = \si$, we have $\la_{\min} = \la_{\max}$. Note that here the central bank chooses expansionary monetary policy (zero interest rate $r = 0$) for larger values of $\la$ than in case (3.b). This has the following interpretation: If the portfolios of banks are independent, then this creates diversification in the system and reduces risk. Therefore, even a relatively risk-averse central bank (large $\la$) can pursue aggressive expansionary monetary policy. 

\smallskip

{\bf (3.c)} Correlated portfolio process with same growth rates $\mu = \mu_i$ and volatilities $\si^2 = \si_i^2$. Assume the driving Brownian motions of these portfolio process are correlated as in~\eqref{eq:correlated-assets}.
After calculation , we get:
$$
g(r) = 
\begin{cases}
\frac{(\mu - r)^2}{\si^2} + r,\ r \le \mu - \si^2;\\
\mu - \frac{\si^2}2,\ r \ge \mu - \si^2;
\end{cases}
$$
$$
\rho^2(t) :=c\left(\frac{\mu - r}{\si}\wedge\si\right)^2,\ \ c := \lambda\left(\frac{N-1}N\rho_0 + \frac1N\right).
$$
Then we can find optimal $r$: this is 
$$
r^* = 
\begin{cases}
0,\ c < 1 - 2\left(\frac{\mu}{\si^2} - 1\right)^{-1};\\
\mu - \si^2,\ c > 1 - 2\left(\frac{\mu}{\si^2} - 1\right)^{-1}.
\end{cases}
$$
Note that for $\rho_0 = 1$ we get case (3.a), and for $\rho_0 = 0$ we get case (3.b). Case (3.c) is intermediate: there is diversification between portfolios of private banks, but this diversification is not complete. Therefore, a risk-averse central bank can pursue more expansionary monetary policy than in Case (3.a), but less so than in Case (3.b).

    \begin{figure*}[htbp]
        \centering
        \subfloat[optimal interest rate]{\includegraphics[width=6.5cm]{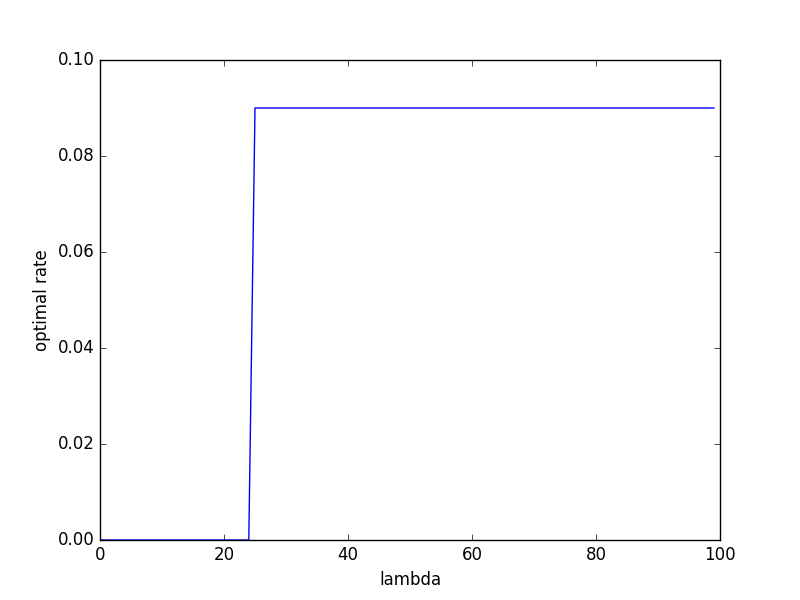}}
        \subfloat[$w(r, \lambda)$]{\includegraphics[width=6.5cm]{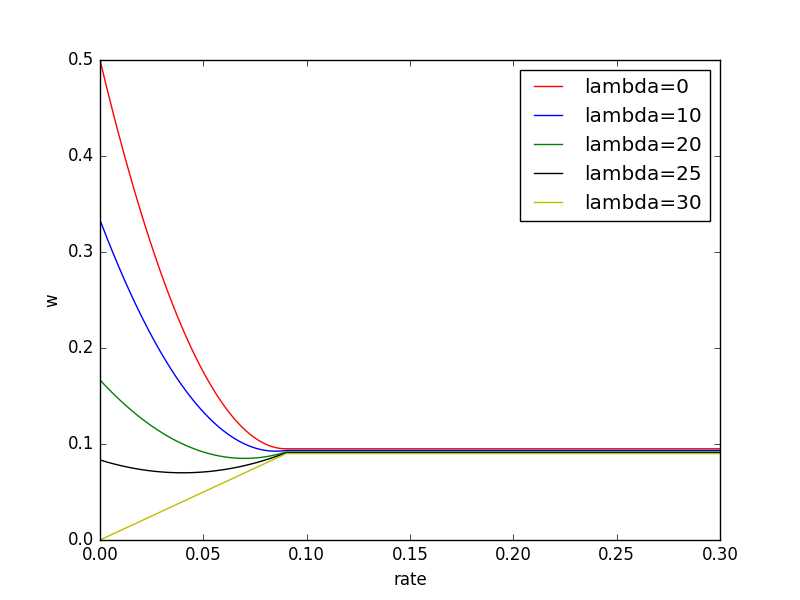}}
        \caption[ ]
        {\small \textit{Optimal interest rate with $N=30$ uncorrelated portfolio process: $\rho_0 = 0$, with $\mu_i=\sigma_i=0.1$ for $i = 1,\ldots,N$ } } 
        \label{optR_fixedMu_sigma}
    \end{figure*}

        \begin{figure*}[htbp]
        \centering
        \subfloat[optimal interest rate]{\includegraphics[width=4cm]{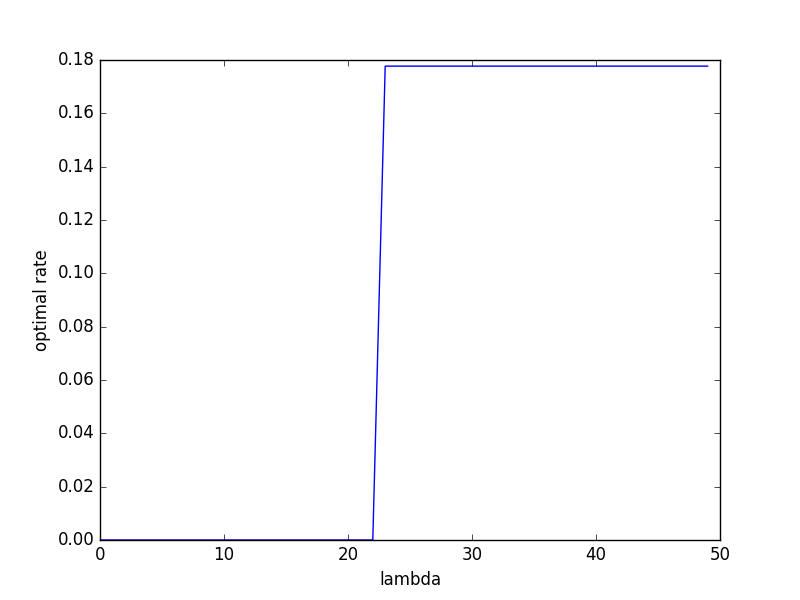}}
        \subfloat[$w(r, \lambda)$]{\includegraphics[width=4cm]{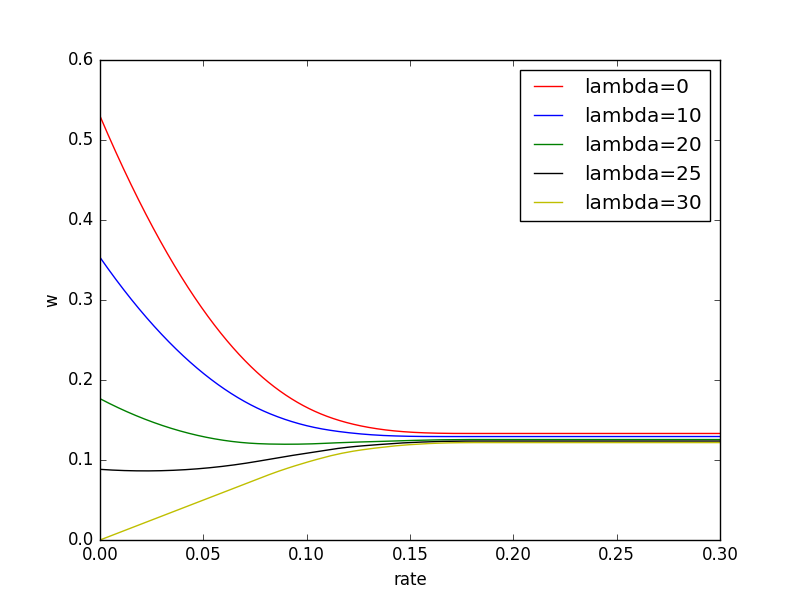}}
        \subfloat[$\mu_i-\sigma_i^2$]{\includegraphics[width=4cm]{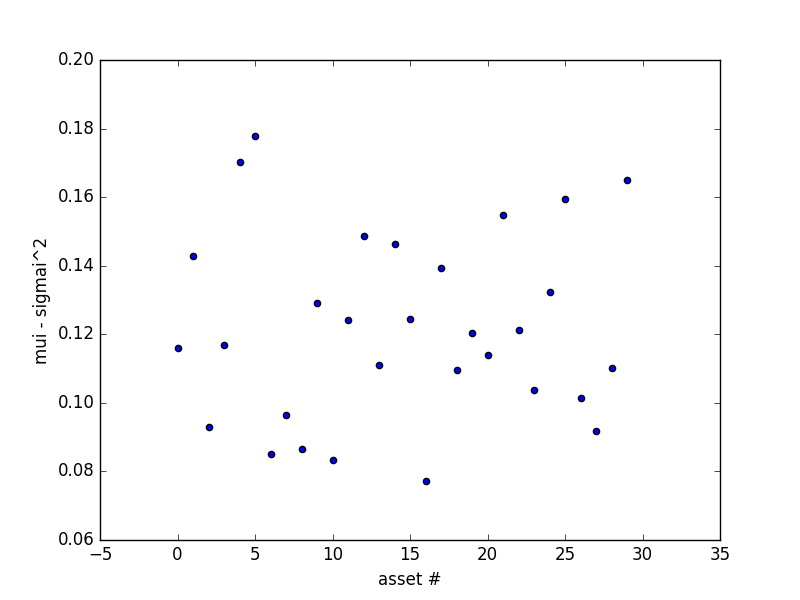}}
        \caption[ ]
        {\small \textit{Optimal interest rate with $N=30$ uncorrelated assets.  with mean and standard deviation $\mu_i,\sigma_i, i = 1,\ldots,N$ i.i.d uniform on $[0.1,0.2]$} } 
        \label{optR_diffMu_sigma}
    \end{figure*}

    \begin{figure*}[htbp]
        \centering
        \subfloat[optimal interest rate]{\includegraphics[width=4cm]{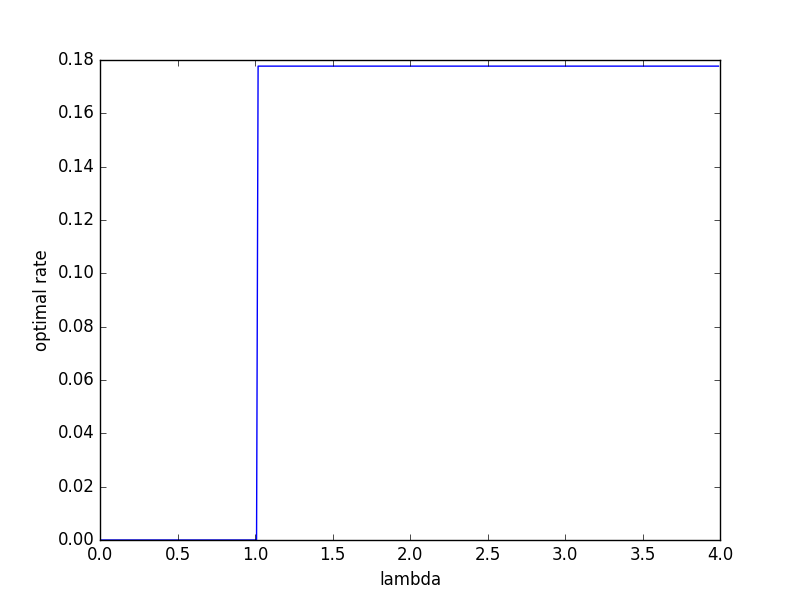}}
        \subfloat[$w(r, \lambda)$]{\includegraphics[width=4cm]{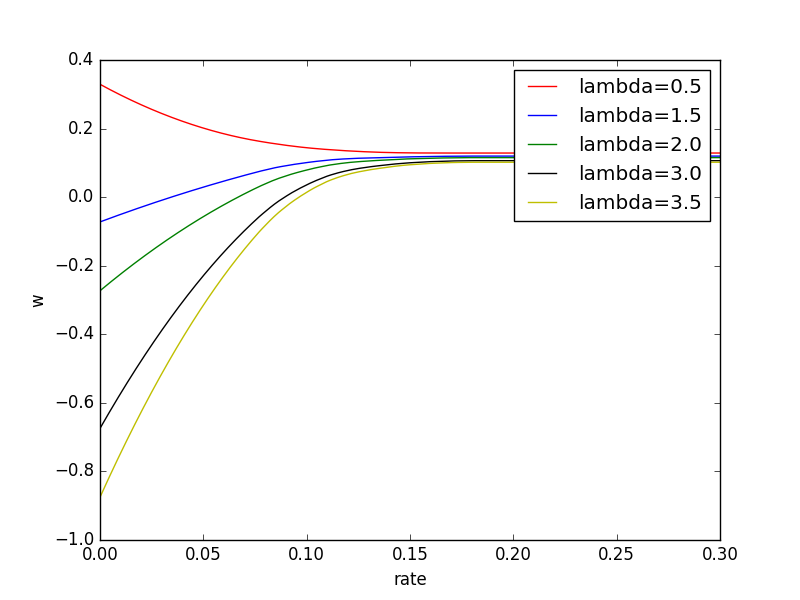}}
        \subfloat[$\mu_i - \sigma_i^2$]{\includegraphics[width=4cm]{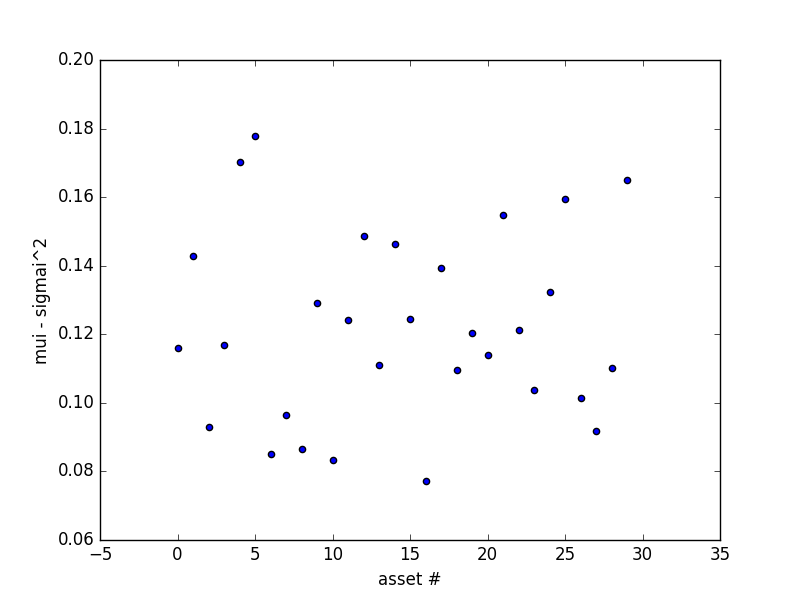}}

        \caption[ ]
        {\small \textit{Optimal interest rate with $N=30$ portfolio process, with correlation $\rho_0=0.8$ and $\mu_i,\sigma_i, i = 1,\ldots,N$ i.i.d uniform on $[0.1,0.2]$} } 
        \label{optR_diffMu_sigma_corrlAsset}
    \end{figure*}
    
    To illustrate the impact of risk aversion $\lambda$ on the optimal interest rate, we simulate three scenarios. First, in Figure~\ref{optR_fixedMu_sigma} we assume uncorrelated portfolio process, each with same mean and volatility $\mu_i=\sigma_i = 0.1$ for $i=1, \ldots , N$. This is Case (3.a), which is discussed above in this section.\\ 

Next, in Figure~\ref{optR_diffMu_sigma} we assume independent portfolio process but with mean and standard deviation $\mu_i,\sigma_i, i = 1,\ldots,N$ i.i.d uniform on $[0.1,0.2]$. This is Case (3.b), which is discussed above in this section. However, to our surprise, we observe the optimal interest rate to have only one jump as we increase the risk aversion parameter $\lambda$. \\

Finally, in Figure~\ref{optR_diffMu_sigma_corrlAsset} we assume correlated portfolio process with $\rho_0=0.8$ and mean and standard deviation $\mu_i,\sigma_i, i = 1,\ldots,N$ drawn from i.i.d uniform $[0.1,0.2]$. This is a generalized version of Case (3.c) discussed above. We observe that due to correlation in the portfolio process, even a relatively less risk averse central bank is forced to raise the interest rate.

\section{Long-term stability} In this section, we analyze the long-term behavior of the centered process: 
\begin{equation}
\label{eq:tilde-Y}
\tilde{Y} = \left(\tilde{Y}_1, \ldots, \tilde{Y}_N\right),\ \tilde{Y}_i(t) = Y_i(t) - \ol{Y}(t),\ i = 1, \ldots, N.
\end{equation}
It takes values in the hyperplane $\Pi := \{y \in \BR^N\mid y_1 + \ldots + y_N = 0\}$. In other words, we are trying to find whether log capitals of banks stay together as time $t$ goes to infinity, or they split into two or more ``clouds''. A similar problem was posed in \cite{BFK2005} and solved in \cite{Ichiba11} for rank-based models of financial markets. In that paper, log capitalizations of stocks are modeled as Brownian particles with drift and diffusion coefficients depending only on the current rank of this particle relative to other particles. This class of systems, known also as {\it first-order models} or {\it competing Brownian particles}, was a subject of much recent research. 

\smallskip

The key parameters are rates $c_{ij}$ of interbank cash flows. Under certain fairly general conditions on these rates, the process~\eqref{eq:tilde-Y} is ergodic: It has a unique statonary distribution;  and for any initial conditions, it converges to this distribution as $t \to \infty$. This section has two results. Theorem~\ref{thm:ergodic-graph} deals with the case of flow rates $c_{ij}$ being time-independent: $c_{ij}(t) \equiv c_{ij}$. Lemma~\ref{lemma:more-general} covers the general case. 

\smallskip

Assume the central bank has already chosen the interest rate $r = r^*$, as above. Then $\ol{Y}$ is a Brownian motion with drift coefficient $g(r^*)$ and diffusion coefficient $\rho^2(r^*)$. We have:
\begin{equation}
\label{eq:SDE-Y}
\md Y_i(t) = \md M_i(t) + \frac1{N}\SL_{j=1}^Nc_{ij}(t)\left(Y_j(t) - Y_i(t)\right)\,\md t,\ i = 1, \ldots, N.
\end{equation}
Here, the process: $M = (M_1, \ldots, M_N)$ is an $N$-dimensional Brownian motion with drift vector and covariance matrix
$\mu^* = (\mu^*_1, \ldots, \mu^*_N)$ and $A^* = (a_{ij}^*)_{i, j = 1,\ldots, N}$ from~\eqref{eq:optimal-drift} and~\eqref{eq:optimal-cov}. The centered process~\eqref{eq:tilde-Y} satisfies the SDE
\begin{equation}
\label{eq:SDE-tilde-Y}
\md\tilde{Y}_i(t) = \md\tilde{M}_i(t) + \frac1{N}\SL_{j=1}^Nc_{ij}(t)\left(\tilde{Y}_j(t) - \tilde{Y}_i(t)\right)\,\md t,\ i = 1, \ldots, N.
\end{equation}
Here, $\tilde{M}_i(t) := M_i(t) - \ol{M}(t)$ for $i = 1, \ldots, N$. Note that $\tilde{M} = (\tilde{M}_1, \ldots, \tilde{M}_N)$ is a $\Pi$-valued Brownian motion. It has drift vector 
\begin{equation}
\label{eq:new-drift}
\tilde{\mu}^* = (\tilde{\mu}^*_1, \ldots, \tilde{\mu}^*_N)',\ \tilde{\mu}^*_i := \mu^*_i - \ol{\mu}^*,\ \ol{\mu}^* := \frac1N\SL_{i=1}^N\mu_i^*
\end{equation}
and covariance matrix
\begin{equation}
\label{eq:new-cov}
\tilde{A}^* = (\tilde{a}^*_{ij}) := VA^*V,\ V = I_N - N^{-1}ee',\ e = (1, \ldots, 1)' \in \BR^N.
\end{equation}
Therefore, $\tilde{Y}$ is a Markov process. Denote by $P^t(x, \cdot)$ its transition function. Define the following measure norm  on $\Pi$ for a function $V : \Pi \to [1, \infty)$:
$$
\norm{\nu}_V := \sup\limits_{|f| \le V}\left|\int_{\Pi}f\md\nu\right|.
$$
We denote the Euclidean norm of a vector $x = (x_1, \ldots, x_d)' \in \mathbb R^d$ by
$$
\norm{x} := \left[x_1^2 + \ldots + x_d^2\right]^{1/2}.
$$

\begin{theorem}
Assume the flow rates $c_{ij}(t) = c_{ij}$ are constant. Define the graph $G$ on the set of vertices $\{1, \ldots, N\}$: $i \leftrightarrow j$ iff $c_{ij} > 0$. If $G$ is connected, then:

\smallskip

(a) $\tilde{Y}$ has a unique stationary distribution $\pi$ on $\Pi$, which is multivariate normal;

\smallskip

(b) the transition function satisfies for some constants $c, \la, k > 0$:
\begin{equation}
\label{eq:ergodic}
\norm{P^t(x, \cdot) - \pi(\cdot)}_{V} \le cV(x)e^{-kt},\ \ V(x) := \exp\left(\frac{\la}2\norm{x}^2\right);
\end{equation}

\smallskip

(c) for any bounded measurable function $f : \Pi \to \BR$ we have, a.s.:
$$
\lim\limits_{T \to \infty}\frac1T\int_0^Tf(\tilde{Y}(s))\,\md s = \int_{\Pi}f(y)\pi(\md y).
$$
\label{thm:ergodic-graph}
\end{theorem}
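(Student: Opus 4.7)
The plan is to recognize~\eqref{eq:SDE-tilde-Y} as an Ornstein--Uhlenbeck process on the hyperplane $\Pi$ whose mean-reversion matrix is the weighted graph Laplacian of $G$, and then deduce all three assertions from standard multivariate OU theory together with the Foster--Lyapunov / Meyn--Tweedie framework.

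First I would rewrite the drift in matrix form: setting $L_{ii} := \tfrac1N\SL_{j\ne i}c_{ij}$ and $L_{ij} := -\tfrac1N c_{ij}$ for $i \ne j$, the system~\eqref{eq:SDE-tilde-Y} takes the form
\begin{equation*}
\md\tilde Y(t) = \tilde\mu^*\,\md t - L\tilde Y(t)\,\md t + \tilde\Si\,\md B(t),
\end{equation*}
with $\tilde\Si\tilde\Si' = \tilde A^*$ and $B$ a standard $\Pi$-valued Brownian motion. The symmetry $c_{ij}=c_{ji}$ makes $L$ symmetric, while $Le=0$ shows that $\Pi=e^\perp$ is $L$-invariant. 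By standard spectral graph theory, connectedness of $G$ is equivalent to $\ker L = \mathrm{span}(e)$, so the restriction $L|_\Pi$ is positive definite with smallest eigenvalue $\beta > 0$; likewise, positive definiteness of $A$ combined with $1+\al_i^*>0$ implies that $\tilde A^*$ is positive definite on $\Pi$. Part (a) is now immediate from Gaussian OU theory: the unique invariant distribution $\pi$ on $\Pi$ is multivariate normal with mean $L|_\Pi^{-1}\tilde\mu^*$ (note $\tilde\mu^* \in \Pi$) and covariance determined by the Lyapunov equation $L|_\Pi \Sigma + \Sigma L|_\Pi = \tilde A^*|_\Pi$.

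For (b) I would verify a Foster--Lyapunov drift inequality for $V(x) = \exp(\tfrac{\la}{2}\norm{x}^2)$. Since $\nabla V = \la x V$ and $\nabla^2 V = (\la I + \la^2 xx')V$, the generator $\CA$ of the process satisfies
\begin{equation*}
\CA V(x) = V(x)\left[\la(\tilde\mu^* - Lx)\cdot x + \tfrac{\la}{2}\tr(\tilde A^*|_\Pi) + \tfrac{\la^2}{2}\,x'\tilde A^* x\right].
\end{equation*}
Choosing $\la > 0$ small enough that $\la\norm{\tilde A^*}_{\mathrm{op}} \le \beta$, the bracket is bounded above by $-\tfrac{\la\beta}{2}\norm{x}^2 + O(\norm{x})$ as $\norm{x}\to\infty$, so there exist $k, b > 0$ and a compact $K \subset \Pi$ with $\CA V \le -kV + b\,\mathbf{1}_K$. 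Non-degeneracy of $\tilde A^*|_\Pi$ gives the strong Feller property and irreducibility of $\tilde Y$ on $\Pi$, so all compact subsets of $\Pi$ are petite; the Down--Meyn--Tweedie theorem then yields~\eqref{eq:ergodic}. Assertion (c) follows immediately: $V$-geometric ergodicity implies positive Harris recurrence, and the ergodic theorem for stationary Markov processes delivers a.s.\ convergence of time averages to $\pi(f)$ for every bounded measurable $f$.

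The principal technical obstacle is calibrating the Lyapunov inequality: $\la$ must be small enough that the quadratic contribution $-\la\beta\norm{x}^2$ from mean reversion outweighs the $\la^2\,x'\tilde A^* x$ term produced by differentiating the exponential twice, yet chosen jointly with $k$ so that the conclusion~\eqref{eq:ergodic} carries real information. A subsidiary point is the verification of the petite-set hypothesis, which reduces to showing non-degeneracy of $\tilde A^*$ on $\Pi$ (hence hypoellipticity of the generator restricted to $\Pi$) and follows from strict positivity of $A$ and of each $1+\al_i^*$.
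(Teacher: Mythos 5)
Your proposal is correct and takes essentially the same route as the paper: identify the drift matrix as a (weighted) graph Laplacian, use connectedness to get a spectral gap on $\Pi$ (the paper packages this as Lemma~\ref{lemma:estimate-M}, proving $x'\CM x \le -c(\CM)\norm{x}^2$ via the eigenstructure of the symmetric generator matrix $\CM$, which is your $-NL$), build the Lyapunov function $V(x) = \exp(\la\norm{x}^2/2)$ for a suitably small $\la$, and invoke the Meyn--Tweedie framework (Lemma~\ref{lemma:previous}) for exponential ergodicity. Two small points of divergence: you obtain uniqueness and Gaussianity of $\pi$ in part (a) directly from linear OU theory on $\Pi$, whereas the paper gets uniqueness from the Foster--Lyapunov argument and only afterward observes Gaussianity; and for part (c) you appeal to the SLLN for positive Harris recurrent processes abstractly, while the paper spells this out via a hitting-time bound (\cite[Theorem 4.3(a)]{MT1993b}) feeding into Khasminskii's Lemma~\ref{lemma:Khas} --- same underlying fact, but the paper is more explicit about the verification of the hypotheses. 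Both of these are presentational rather than substantive differences.
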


\begin{proof} From the properties of solutions of SDE and nondegeneracy of the covariance matrix $\tilde{A}^*$ of $M$, we have the following positivity property: 
\begin{equation}
\label{eq:positive}
P^t(x, C) > 0\ \mbox{for all}\ t > 0,\ x \in \Pi,\ C \subseteq \Pi\ \mbox{with}\ \mes_{\Pi}(C) > 0.
\end{equation}
The generator of $\tilde{Y}$ for all twice continuously differentiable functions $f : \Pi \to \BR$ is given by:
\begin{equation}
\label{eq:generator}
\CL f(x) := \left[\tilde{\mu}^* + \frac1{N}\CM x\right]\cdot \nabla f + \frac12\SL_{i=1}^N\SL_{j=1}^N\tilde{a}^*_{ij}\frac{\pa^2f}{\pa x_i\pa x_j}.
\end{equation}
Here, $\CM = (m_{ij})_{i, j = 1, \ldots, N}$ is the following matrix:
\begin{equation}
\label{eq:m-matrix}
m_{ij} = 
\begin{cases}
c_{ij},\ i \ne j;\\
-\SL_{k \ne i}c_{ik},\ i = j.
\end{cases}
\end{equation}
Now, plug this function $V$ from~\eqref{eq:ergodic} for a suitable $\la$ into the generator~\eqref{eq:generator}. Then \begin{equation}
\label{eq:derivatives-2}
\nabla V = \la xV,\ \frac{\pa^2 V}{\pa x_i\pa x_j} = \left(\la^2x_ix_j + \la\de_{ij}\right)V. 
\end{equation}
Combining~\eqref{eq:derivatives-2} with~\eqref{eq:generator}, we get:
\begin{equation}
\label{eq:estimate-preliminary}
\CL V = \left[\left(\la\tilde{\mu}^*\cdot x + \frac{\la}{N}x'\mathcal{M}x\right) + \frac12\left(\la^2(x'\tilde{A}^*x) + \la\tr(\tilde{A}^*)\right)\right]V.
\end{equation}
Using Lemma~\ref{lemma:estimate-M} below, we get:
\begin{equation}
\label{eq:negative-definite}
\frac1{N}\left[x'\CM x\right] \le -c_0\norm{x}^2,\ \ c_0 := \frac{c(\CM)}N.
\end{equation}
There exists a constant $a_0 > 0$ such that for all $x \in \Pi$, we have: $x'\tilde{A}^*x \le a_0\norm{x}^2$. Combining this observation with~\eqref{eq:estimate-preliminary} and ~\eqref{eq:negative-definite}, we get: 

\begin{equation}
\label{eq:estimate-new}
\CL V \le \left[\la\tilde{\mu}^*\cdot x - \la c_0\norm{x}^2 + \frac12a_0\la^2\norm{x}^2 + \frac12\la\tr(\tilde{A}^*)\right]\, V.
\end{equation}
Choose $\la := c_0/a_0$, then~\eqref{eq:estimate-new} takes the form
\begin{equation}
\label{eq:estimate-final}
\CL V(x) \le K(x)V(x),\ \ K(x) := \frac{c_0}{a_0}\tilde{\mu}^*\cdot x - \frac{c_0^2}{2a_0^2}\norm{x}^2 + \frac12\frac{c_0}{a_0}\tr(\tilde{A}^*).
\end{equation}
Note that, as $\norm{x} \to \infty$, we have: $K(x) \to -\infty$. Therefore, for some constants $c_1, c_2 > 0$, 
\begin{equation}
\label{eq:constants}
K(x) \le -c_1,\ \ \norm{x} \ge c_2.
\end{equation}
Recall the definition of the ball $\mathcal B(c_2)$ in~\eqref{eq:ball}. Since $\CL V$ and $V$ are continuous, we have:
\begin{equation}
\label{eq:other-constants}
\max\limits_{x \in \mathcal B(c_2)}\left[\CL V(x) + c_1V(x)\right] =: c_3  < \infty. 
\end{equation}
Combining~\eqref{eq:estimate-final} with~\eqref{eq:constants} and~\eqref{eq:other-constants}, we get:
\begin{equation}
\label{eq:final-new}
\CL V(x) \le -c_1V(x) + c_31_{\mathcal B(c_2)}(x). 
\end{equation}
Finally, combine~\eqref{eq:final-new} with the Feller property of $\tilde{Y}$ (i.e. for a bounded continuous function f, the map $x \mapsto P^tf(x)$ is also bounded and continuous for the transition function $P$ of $\tilde{Y}$), and with the positivity property~\eqref{eq:positive}. Apply Lemma~\ref{lemma:previous} from Appendix to Lebesgue reference measure $\psi$ and the function $V$ from~\eqref{eq:ergodic}. This completes the proof of (a) (the uniqueness of a stationary distribution), as well as of (b). The fact that this stationary distribution $\pi$ is multivariate normal follows from the observation that $\tilde{Y}$ is a multidimensional Ornstein-Uhlenbeck process on the hyperplane $\Pi$. 

\medskip

To finish the proof of Theorem~\ref{thm:ergodic-graph}, let us show (c): This is similar to the proof of \cite[Theorem 1]{Ichiba11}. Take any $r \ge c_2$. Adjusting the proof of~\eqref{eq:final-new} above, we find that there exists a positive constant $d(r)$ such that 
$$
\CL V(x) \le -c_1V(x) + d(r)1_{\mathcal B(r)}(x).
$$
Let $\tau_{\mathcal B(r)} := \inf\{t \ge 0 \mid \tilde{Y}(t) \in \mathcal B(r)\}$ be the hitting moment of the ball $\mathcal B(r)$, for a fixed $r > 0$. Apply \cite[Theorem 4.3(a)]{MT1993b}, with the function $V$ from~\eqref{eq:ergodic}, with $f := 1$,  $\de := 0$. Then 
$$
\ME_x\tau_{\mathcal B(r)} \le c_1^{-1}V(x),\ x \in \Pi.
$$
Use the fact that $V$ is bounded on compact subsets to verify assumption (b) in Lemma~\ref{lemma:Khas}. Assumption (a) of this lemma follows from the observation that the covariance matrix of $\tilde{Y}$ is constant. Now apply Lemma~\ref{lemma:Khas} from \cite[Theorem 4.1, Theorem 4.2]{KhasBook}, cited as \cite[Proposition 1]{Ichiba11}. This completes the proof of part (c) of Theorem~\ref{thm:ergodic-graph}. 
\end{proof}


\begin{lemma} Assume the flow rates are given by
$$
c_{ij}(t) = c_{ij}f(\tilde{Y}(t)),\ i, j = 1, \ldots, N,\ i \ne j;\ t \ge 0,
$$
where $f : \Pi \to \BR$ is a function such that 
$$
\varliminf\limits_{\norm{z} \to \infty}f(z) > 0,
$$
and $c_{ij}$ are real numbers as in Theorem~\ref{thm:ergodic-graph}. Then the conclusion of Theorem~\ref{thm:ergodic-graph} is the same, minus the conclusion that $\pi$ is multivariate normal. 
\label{lemma:more-general}
\end{lemma}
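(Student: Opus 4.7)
The plan is to run the proof of Theorem~\ref{thm:ergodic-graph} essentially unchanged, with the only real modification occurring in the Lyapunov drift estimate, where the state-dependent factor $f$ is absorbed using the $\varliminf$ hypothesis. Since the rates $c_{ij}(t) = c_{ij} f(\tilde Y(t))$ depend on the state $\tilde Y(t)$ but not explicitly on time, $\tilde Y$ is still a time-homogeneous Markov process on $\Pi$, with the same diffusion matrix $\tilde A^*$ from~\eqref{eq:new-cov} and with drift coefficient $\tilde\mu^* + \frac{f(x)}{N}\CM x$, where $\CM$ is the matrix~\eqref{eq:m-matrix} built from the constants $c_{ij}$ of Theorem~\ref{thm:ergodic-graph}. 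Under the standing regularity of $f$ (locally bounded is enough for the Lyapunov bookkeeping; continuity suffices for the Feller property), the generator is
$$
\CL g(x) = \Bigl(\tilde{\mu}^* + \frac{f(x)}{N}\CM x\Bigr)\cdot\nabla g + \frac12\SL_{i=1}^N\SL_{j=1}^N\tilde{a}^*_{ij}\frac{\pa^2 g}{\pa x_i\pa x_j}.
$$

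The second step is to apply this $\CL$ to the same Lyapunov function $V(x) = \exp(\la\|x\|^2/2)$ as in~\eqref{eq:ergodic}. Using the derivative formulas~\eqref{eq:derivatives-2}, we obtain
$$
\CL V(x) = \Bigl[\la\tilde{\mu}^*\cdot x + \frac{\la f(x)}{N}\,x'\CM x + \frac{\la^2}2\,x'\tilde{A}^*x + \frac{\la}2\tr(\tilde{A}^*)\Bigr]V(x).
$$
By the $\varliminf$ hypothesis, pick $\de_0 > 0$ and $R_0 > 0$ with $f(x) \ge \de_0$ for $\|x\| \ge R_0$. Lemma~\ref{lemma:estimate-M} gives $x'\CM x \le -c(\CM)\|x\|^2$ on $\Pi$, so for $\|x\| \ge R_0$ the second term is at most $-\la\de_0 c(\CM)N^{-1}\|x\|^2$, while $x'\tilde A^* x \le a_0\|x\|^2$ as in the original proof. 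Choosing $\la$ small enough that $\la a_0/2 < \de_0 c(\CM)/N$, the whole bracket becomes $\le -c_1$ outside a large enough ball and is bounded on compacta, which reproduces the Foster--Lyapunov inequality~\eqref{eq:final-new} verbatim.

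With this estimate in hand, the remainder copies the original argument. The positivity property~\eqref{eq:positive} carries over because the diffusion coefficient on $\Pi$ is nondegenerate and unchanged, so local support/Hörmander arguments apply; the Feller property follows from continuity of $f$. Lemma~\ref{lemma:previous} of the Appendix, applied to Lebesgue reference measure on $\Pi$ and to $V$, then yields parts (a) and (b). For part (c), Lemma~\ref{lemma:Khas} applies as before, since assumption (a) there is that the covariance of $\tilde Y$ is constant (which still holds) and assumption (b) follows from $\ME_x\tau_{\mathcal B(r)} \le c_1^{-1}V(x)$ via \cite[Theorem 4.3(a)]{MT1993b} in exactly the same way. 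The one claim of Theorem~\ref{thm:ergodic-graph} which is \emph{not} recovered is Gaussianity of $\pi$, since $\tilde Y$ is no longer an Ornstein--Uhlenbeck process once the drift is state-dependent, and this is why the statement of the lemma drops that conclusion.

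The main obstacle, and the essential use of the hypothesis, is exactly the Lyapunov step in the middle paragraph: the quadratic damping $-\la \de_0 c(\CM)N^{-1}\|x\|^2$ coming from the interbank flows must dominate the quadratic \emph{growth} $+\tfrac12\la^2 a_0\|x\|^2$ arising from the second-derivative term, which forces $\la$ to be chosen small relative to $\de_0$; the $\varliminf$ condition is exactly what guarantees the uniform lower bound $\de_0$ on $f$ outside a compact set that makes this inequality possible. Everything else is a routine transcription of the proof of Theorem~\ref{thm:ergodic-graph}.
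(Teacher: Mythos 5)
Your proposal is correct and follows essentially the same route as the paper: keep the same exponential Lyapunov function, recompute $\CL V$ with the state-dependent factor $f(\tilde Y)$ in the flow term, use the $\varliminf$ hypothesis to extract a uniform lower bound $f(x)\ge \delta_0$ outside a compact set so that the estimate of Theorem~\ref{thm:ergodic-graph} is preserved with the constant $c_0$ replaced by $c_0\delta_0$, and then invoke Lemmas~\ref{lemma:previous} and~\ref{lemma:Khas} exactly as before. The only minor difference is that the paper states this more tersely, whereas you also spell out the $\la$-balancing step and the reason Gaussianity is lost; both are accurate elaborations of the same argument.
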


\begin{proof}
Similar to Theorem~\ref{thm:ergodic-graph}, but with the following changes: Instead of~\eqref{eq:estimate-preliminary}, we have:
$$
\CL V(x) = \left[\la\tilde{\mu}^*\cdot x + \frac{\la f(x)}{N}x'\mathcal Mx + \frac12\left(\la^2(x'\tilde{A}^*x) + \la\tr(\tilde{A}^*)\right)\right]V.
$$
There exist $c_4, c_5 > 0$ such that $f(x) \ge c_4$ for $x \in \Pi$, $\norm{x} \ge c_5$. Therefore, for such $x$, the estimate~\eqref{eq:estimate-final} is preserved with $c_0$ changed to $c_0c_4$. The rest of the proof is similar to that of Theorem~\ref{thm:ergodic-graph}. 
\end{proof}

Note, however, that if the graph $G$ is disconnected, then this stability breaks down. 
Indeed, assume $G$ has connected components $G_1$ and $G_2$ (only two for sake of notational simplicity; analysis is the same for more than two connected components), and the flow rates $c_{ij}$ are positive constants if $i$ and $j$ are adjacent, $c_{ij} = 0$ if not. By Theorem~\ref{thm:ergodic-graph}, we get: 
$$
(Y_i - \ol{Y}_1)_{i \in G_1},\ (Y_i - \ol{Y}_2)_{i \in G_2}
$$
are ergodic, that is, they satisfy an inequality similar to~\eqref{eq:ergodic}. Here, 
\begin{equation}
\label{eq:averages}
\ol{Y}_1(t) := \frac1{|G_1|}\SL_{i \in G_1}Y_i(t),\ \ \ol{Y}_2(t) := \frac1{|G_2|}\SL_{i \in G_2}Y_i(t).
\end{equation}
But these averages from~\eqref{eq:averages} are, in fact, Brownian motions with certain drift and diffusion coefficients, which are easy to calculate from~\eqref{eq:SDE-Y}. They are correlated, but not perfectly. Therefore, $\ol{Y}_1 - \ol{Y}_2$ is not ergodic, and the process $\tilde{Y}$ defined in~\eqref{eq:tilde-Y} is also not ergodic. Private banks are separated into two groups, which ``drift'' from each other.

\section{Concluding remarks} 

We studied a model of $N$ private banks exchanging money through interbank flows, borrowing from the general economy under an interest rate set by the central bank, and investing in portfolios consisting of risky assets; these portfolios are modeled by correlated geometric Brownian motions. This represents an enhancement of the model~\eqref{eq:basic-mean-field}, which is obtained in \cite{JP-Rene} as a result of banks borrowing from each other. We generalize the interbank flows from \cite{JP-Rene}, making them heterogeneous.  

\smallskip

Each private bank maximizes its expected terminal logarithmic utility. The central bank maximizes exponential utility function of the total size of the system. We are able to solve the control problems for each private banks and the central bank  because of this special choice of utility functions. The resulting dynamics looks a bit like~\eqref{eq:basic-mean-field}, except that each private bank has its own growth rate and volatility in the driving Brownian motion, and the flow rates $c_{ij}$ depend on $i$ and $j$. 

\smallskip

Our setup allowed us to study systemic risk and distribution of defaults under different market and investment scenarios. We also observe common economic phenomena of liquidity trap (where the monetary policy fails to boost the investment in risky assets) naturally arising from the model.

\smallskip

For future research, one can consider the case when some but not all portfolios $S_i$ satisfy $\mu_i \le \si_i^2$ (and are therefore unprofitable). In addition, it might be interesting to consider different utility functions for private banks, for example power utility. Since the corresponding Hamilton-Jacobi-Bellman equations are likely to be intractable, the problem might be analyzed using mean-field formulation, each bank is competing against the ``mass of banks''.

\section{Appendix} Let us state explicitly convergence results for general continuous-time Markov processes, used in the proof of Theorem~\ref{thm:ergodic-graph}. These results from classic papers \cite{DMT1995, MT1993a, MT1993b} link Lyapunov functions with  long-term convergence. In \cite[Lemma 2.3, Theorem 2.6]{MyOwn10}, we reformulate these results to make them more convenient for our use. Let us restate these results here, for convenience of the reader.

\begin{lemma} Take a Feller continuous strong Markov process $\mathfrak X = (\mathfrak X(t),\, t \ge 0)$ on the metric state space $\mathcal E$, with transition function $P^t(x, \cdot)$, and generator $\CL$. Denote by $\MP_x$ the probability measure under which $\mathfrak X(0) = x$. Assume for some positive reference measure $\psi$ and a function $V : \mathcal E \to [1, \infty)$ in the domain $\mathcal D(\CL)$ of the generator $\CL$, we have:

\smallskip

(a) for some compact subset $C \subseteq \mathcal E$, we have $\psi(C) > 0$;

\smallskip

(b) for all $\psi$-positive subsets $A \subseteq \mathcal E$, $x \in \mathcal E$, $t > 0$, we have: $P^t(x, A) > 0$. 

\smallskip

(c) for some constants $b, k > 0$ and a compact set $K \subseteq \mathcal E$, we have:
$$
\CL V(x) \le -kV(x) + b1_K(x),\ x \in \mathcal E; \ \mbox{and}\ \sup\limits_{x \in K}V(x) < \infty.
$$
Then there exists a unique stationary distribution $\pi$, and the transition function satisfies the following estimate: for some constants $D, \vk > 0$, 
$$
\norm{P^t(x, \cdot) - \pi(\cdot)}_V \le DV(x)e^{-\vk t},\ x \in \mathcal E,\ t \ge 0.
$$
\label{lemma:previous}
\end{lemma}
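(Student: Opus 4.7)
The plan is to derive this lemma as a packaging of the classical $V$-uniform ergodic theorem for continuous-time Markov processes due to Down, Meyn and Tweedie \cite{DMT1995, MT1993a, MT1993b}; hypotheses (a)--(c) are tailored to supply exactly the three ingredients that machinery requires, namely $\psi$-irreducibility, a petite compact set, and a geometric Foster--Lyapunov drift inequality.

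First I would verify $\psi$-irreducibility of the semigroup. Hypothesis (b) asserts $P^t(x, A) > 0$ for every $\psi$-positive $A$, every $t > 0$, and every $x \in \mathcal E$, which is precisely $\psi$-irreducibility of $\mathfrak X$, and also of any $h$-skeleton chain $(\mathfrak X(nh))_{n \ge 0}$. In particular the resolvent kernel $R_\beta(x, \cdot) = \beta\int_0^\infty e^{-\beta t}P^t(x, \cdot)\,\md t$ is equivalent to a nontrivial $\sigma$-finite measure that charges the compact set $C$ of (a).

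Second I would upgrade this to the $T$-process property, and thereby to petiteness of compact sets. The Feller hypothesis gives weak continuity of $x \mapsto P^t(x, \cdot)$; combined with $\psi$-irreducibility and the existence of the compact reference set $C$ from (a), a standard argument (see \cite{MT1993a}) identifies $\mathfrak X$ as a $T$-process, and consequently every compact subset of $\mathcal E$ is petite. In particular the compact set $K$ appearing in the drift inequality (c) is petite, and so is every sublevel set $\{V \le M\}$.

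Third I would feed everything into the $V$-uniform ergodicity theorem. The drift inequality $\CL V \le -kV + b\mathbf 1_K$ in (c), together with $\sup_K V < \infty$ and petiteness of $K$, are exactly the hypotheses of \cite[Theorem 5.2(c) and Theorem 6.1]{DMT1995}. That theorem yields the existence and uniqueness of an invariant probability measure $\pi$ with $\pi(V) < \infty$, together with the quantitative convergence $\norm{P^t(x, \cdot) - \pi(\cdot)}_V \le DV(x)e^{-\vk t}$ for constants $D, \vk > 0$ independent of $x$. The reformulation in \cite[Lemma 2.3, Theorem 2.6]{MyOwn10} packages the above steps into the statement as given.

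The main obstacle is the second step: in continuous time, promoting the pointwise positivity of (b) plus Feller continuity to petiteness of compact sets is more delicate than the discrete analogue, because petiteness is naturally formulated through sampled chains or the resolvent. The role of hypothesis (a) is precisely to furnish the compact support on which the minorizing petite measure lives; without such a compact reference set one would only get $\psi$-irreducibility, not the stronger $T$-process property needed to apply \cite{DMT1995}.
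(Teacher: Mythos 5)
The paper states this lemma without proof, simply citing it as a restatement of \cite[Lemma 2.3, Theorem 2.6]{MyOwn10}, which in turn packages the Down--Meyn--Tweedie $V$-uniform ergodicity theory from \cite{DMT1995, MT1993a, MT1993b}; your proposal reconstructs exactly that chain (irreducibility from (b), petiteness of compacts via the $T$-process route using Feller continuity and (a), and the geometric drift (c) fed into \cite{DMT1995}), so it is essentially the same approach. One small refinement: hypothesis (b), demanding $P^t(x,A)>0$ for \emph{every} $t>0$, is strictly stronger than $\psi$-irreducibility of the process alone — it also gives $\psi$-irreducibility of every skeleton chain (and hence the aperiodicity needed in continuous time), which you correctly invoke but slightly undersell by calling (b) ``precisely $\psi$-irreducibility.''
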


The following Strong Law of Large Numbers is taken from \cite[Theorem 4.1, Theorem 4.2]{KhasBook}. It holds under an assumption  which can be called {\it uniform positive recurrence}, and which can be deduced from existence of Lyapunov functions. Assume that $\mathcal E = \BR^d$ above, and $\mathcal X$ is the solution of an SDE with a certain drift vector, and the covariance matrix $\mathcal A(\cdot)$. Let $\tau_C := \inf\{t \ge 0\mid \mathfrak X(t) \in C\}$ be the hitting time of a subset $C \subseteq \BR^d$. Assume there exists a unique stationary distribution $\pi$. 

\begin{lemma} Assume for some open bounded domain $D \subseteq \CE$ with $C^2$ boundary, we have:

\smallskip

(a) the smallest eigenvalue of $\mathcal A(x)$ for $x \in \ol{D}$ is uniformly bounded away from zero;

\smallskip

(b) for every compact subset $\CK \subseteq \BR^d$, we have: $\sup\limits_{x \in \mathcal K}\ME_x\tau_D < \infty$. 

\smallskip

Then $\MP_x$-a.s. for every $x \in \BR^d$ and  bounded measurable function $f : \BR^d \to \BR$, we have:
$$
\lim\limits_{T \to \infty}\frac1T\int_0^Tf(\mathfrak X(t))\,\md t = \int_{\BR^d}f(x)\pi(\md x).
$$
\label{lemma:Khas}
\end{lemma}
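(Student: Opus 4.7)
The plan is to establish Lemma~\ref{lemma:Khas} via the regenerative-cycle (small-set) approach that underlies Khasminskii's original proof. Throughout, assumption (a) is exploited to produce local heat-kernel lower bounds, and assumption (b) is used to control the return times globally.

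First, I would show that $\ol{D}$ contains a Meyn--Tweedie small set. Since $\mathfrak X$ is Feller, its drift is continuous on $\ol{D}$ and hence bounded there; combined with the uniform ellipticity of $\mathcal A(\cdot)$ on $\ol{D}$ from (a), classical parabolic PDE theory (Aronson-type heat-kernel bounds, or the parabolic Harnack inequality applied on a slightly shrunken subdomain $D'$ with $\ol{D'}\subset D$) yields a uniform minorization: there exist $t_0>0$, $\eps>0$ and a probability measure $\nu$ concentrated on $D'$ such that
$$
P^{t_0}(x, \cdot) \ge \eps\,\nu(\cdot),\ \ x \in \ol{D'}.
$$
This supplies the ``recurrent atom'' needed for regeneration. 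Combining this minorization with (b) and a short auxiliary estimate, one obtains $\sup_{x\in\CK}\ME_x\tau_{\ol{D'}}<\infty$ for every compact $\CK\subseteq\BR^d$.

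Next, I would apply Nummelin splitting (or the Athreya--Ney construction of an artificial atom) to the discrete skeleton $\{\mathfrak X(n t_0)\}_{n\ge 0}$, obtaining regeneration times $0=\tau_0<\tau_1<\tau_2<\cdots$ across which the continuous-time trajectory decomposes into i.i.d.\ excursions. The block lengths $\eta_n := \tau_{n+1}-\tau_n$ have common law with $\ME\eta_1<\infty$, since each block is dominated by an excursion between successive visits to $\ol{D'}$ whose expected length is finite by the bound above. For bounded measurable $f$ set $\xi_n := \int_{\tau_n}^{\tau_{n+1}} f(\mathfrak X(t))\,\md t$; then $|\xi_n| \le \norm{f}_{\infty}\eta_n$ and the $\xi_n$ are i.i.d.\ and integrable. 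Kolmogorov's SLLN applied to $(\xi_n)$ and $(\eta_n)$ separately, combined with the elementary renewal theorem used to compare $T$ with $\tau_{N(T)}$, yields
$$
\frac1T\int_0^T f(\mathfrak X(t))\,\md t \xrightarrow[T\to\infty]{\mathrm{a.s.}} \frac{\ME\xi_1}{\ME\eta_1}.
$$

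Finally, I would identify the limit. The map $f\mapsto \ME\xi_1/\ME\eta_1$ is a bounded positive linear functional on $C_b(\BR^d)$ of total mass one, hence given by integration against some probability measure $\tilde\pi$ on $\BR^d$; the regenerative construction makes $\tilde\pi$ invariant under the semigroup $(P^t)$, and uniqueness of the stationary distribution $\pi$ assumed in the lemma forces $\tilde\pi=\pi$. Extension from $C_b$ to bounded measurable $f$ is then obtained by the standard monotone-class / approximation argument using the fact that the SLLN above already holds for the i.i.d.\ sums, independently of continuity. The principal obstacle is the first step: producing the uniform minorization on $\ol{D'}$ using only the ellipticity hypothesis (a) and without a priori global boundedness of the drift of $\mathfrak X$. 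Once this small-set structure is in place, the Nummelin-type regeneration and the renewal SLLN are entirely standard.
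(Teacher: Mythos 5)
The paper does not prove this lemma at all: it introduces it as a citation, ``The following Strong Law of Large Numbers is taken from [Khasminskii, \textit{Stochastic Stability of Differential Equations}, Theorems 4.1, 4.2],'' and later invokes it via the reference [Ichiba et al.\ 2011, Proposition 1]. There is therefore no ``paper proof'' to align with; your task reduces to supplying a proof of a black-boxed external result.

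Your outline is a legitimate and essentially correct route, but it is not Khasminskii's. Khasminskii argues geometrically: he fixes two concentric domains $D_1\subset D_2$ and defines a sequence of stopping times by alternately hitting $\partial D_1$, then $\partial D_2$, then returning to $\partial D_1$; the embedded process on $\partial D_1$ is a Markov chain which, by the Harnack inequality that ellipticity (a) gives on the annulus, satisfies Doeblin's condition; condition (b) gives uniformly finite expected cycle durations; the SLLN for Doeblin chains plus a ratio argument yields the claim. You instead discretise to the skeleton chain $\{\mathfrak X(nt_0)\}$, produce a small set $\ol{D'}$ from Aronson/Harnack lower bounds, and invoke Nummelin splitting to manufacture i.i.d.\ regeneration cycles. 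Both are cycle arguments; yours is the Meyn--Tweedie phrasing, Khasminskii's is the boundary-embedded-chain phrasing. Khasminskii's version lives entirely in continuous time and needs no abstract splitting; yours is shorter to state once the small-set machinery is taken for granted.

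Two places in your sketch deserve tightening. First, ``since $\mathfrak X$ is Feller, its drift is continuous on $\ol D$'' is a non sequitur: Feller continuity of the semigroup does not give continuity of the SDE coefficients. What you actually need --- and what the surrounding paragraph of the paper implicitly grants by saying $\mathfrak X$ solves an SDE --- is local boundedness of the drift and of $\mathcal A$, which you should assume directly rather than derive from Fellerity. Second, your claim that ``each block is dominated by an excursion between successive visits to $\ol{D'}$'' is backwards: Nummelin regeneration succeeds only with probability $\eps$ at each visit, so a block typically \emph{contains} a geometric number of such excursions, and $\ME\eta_1<\infty$ follows from $\eps^{-1}\cdot\sup_{x}\ME_x\tau_{\ol{D'}}<\infty$ via the usual Kac-type computation, not from domination. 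With these two repairs the argument is sound.
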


\begin{lemma} Fix $\mu \in \mathbb R$ and $\sigma > 0$. Take a function $h : \mathbb R \to \mathbb R$, defined as 
$$
h(x) = \mu x - \frac{\sigma^2}2 x^2 - r (x-1)_{+}.
$$
Its global maximum is reached at the point $x^*$ and is equal to $h^* = h(x^*)$, where:

\begin{alignat*}{2}
    & \begin{aligned} & h(x^*) := 
\begin{cases}
r + \frac{(\mu - r)^2}{2\si^2},\ \ \mu -\si^2\ge r ;\\
\mu - \frac{\sigma^2}{2} ,\ \ \ 0 \le \mu -\si^2 \le r ; \\
\frac{\mu}{2\si^2} ,\ \ \mu \le \si^2.
\end{cases}
  \end{aligned} 
    & \hskip 3em &
  \begin{aligned}
  & x^* := 
\begin{cases}
\frac{\mu - r}{\si^2},\ \ \mu -\si^2\ge r ;\\
1 ,\ \ 0 \le \mu -\si^2 \le r ; \\
\frac{\mu}{\si^2} ,\ \ \mu \le \si^2.
\end{cases}
  \end{aligned}
\end{alignat*}

\label{lemma:solutionAlpha}
\end{lemma}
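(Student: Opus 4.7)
The plan is to exploit the simple structure of $h$: it is continuous and piecewise quadratic, with a kink only at $x=1$ coming from the $(x-1)_+$ term. Writing $h$ explicitly on each half-line gives two downward-opening parabolas that agree at $x=1$ with the common value $\mu - \sigma^2/2$:
\begin{equation*}
h(x) = \begin{cases} \mu x - \tfrac{\sigma^2}{2}x^2, & x \le 1, \\[2pt] r + (\mu-r)x - \tfrac{\sigma^2}{2}x^2, & x \ge 1. \end{cases}
\end{equation*}
Since $\sigma>0$, each piece is strictly concave with unique unconstrained vertices $x_{-}^{*} = \mu/\sigma^2$ and $x_{+}^{*} = (\mu-r)/\sigma^2$ respectively. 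The global maximum of $h$ is therefore attained either at one of these vertices (provided it lies in the region where its own parabola governs), or else at the kink $x=1$.

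I would then carry out a case split based on the positions of $x_{-}^{*}$ and $x_{+}^{*}$ relative to $1$, noting that $r \ge 0$ forces $x_{+}^{*} \le x_{-}^{*}$. In the regime $\mu \le \sigma^2$ we have $x_{+}^{*} \le x_{-}^{*} \le 1$, so the left vertex lies in its own region while the right parabola is decreasing on $[1,\infty)$; the maximum is $h(\mu/\sigma^2) = \mu^2/(2\sigma^2)$ (this matches the statement modulo an evident typo $\mu/(2\sigma^2)$ vs.\ $\mu^2/(2\sigma^2)$, consistent with \eqref{eq:h-max}). In the regime $\mu - \sigma^2 \ge r$ we have $x_{+}^{*} \ge 1$, so the right vertex is attained, giving $h^{*} = r + (\mu-r)^2/(2\sigma^2)$. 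In the intermediate regime $0 \le \mu - \sigma^2 \le r$ we have $x_{-}^{*} \ge 1 \ge x_{+}^{*}$, so neither vertex lies in its own region; each piece is monotone toward $x=1$ on its half-line, and the maximum is attained at the kink with value $h(1) = \mu - \sigma^2/2$.

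To verify that the values and maximizers in the three regimes are mutually consistent at their common boundaries, I would record the algebraic identity
\begin{equation*}
\Bigl(r + \tfrac{(\mu-r)^2}{2\sigma^2}\Bigr) - \Bigl(\mu - \tfrac{\sigma^2}{2}\Bigr) = \tfrac{(\mu - r - \sigma^2)^2}{2\sigma^2} \ge 0,
\end{equation*}
which shows both that the right vertex dominates the boundary value in the first regime, and that the two expressions coincide precisely when $\mu - \sigma^2 = r$, i.e.\ at the regime boundary. An analogous identity at $\mu = \sigma^2$ settles the boundary between the first and third regimes.

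I do not expect any genuine obstacle: the argument is a one-variable concave piecewise-quadratic optimization, and the only thing to keep track of is the bookkeeping of three cases and the continuity of $h$ at $x=1$. The cleanest writeup is to display the two pieces, compute their vertices, and then run the three-way case analysis above.
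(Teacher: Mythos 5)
Your argument is correct and follows essentially the same route as the paper's: split $h$ into two downward parabolas at the kink $x=1$, locate the two unconstrained vertices, and do a three-way case analysis on where they sit relative to $1$. You also rightly flag that the displayed value $\mu/(2\sigma^2)$ in the lemma is a typo for $\mu^2/(2\sigma^2)$ (consistent with~\eqref{eq:h-max}), and the extra boundary-consistency identity you record does the same job as the paper's remark that $h(1)\le h(x_1)$ in Case~1.
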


\begin{proof} We can write
$$
h(x) = 
\begin{cases}
\mu x - \frac{\sigma^2}2x^2 - r(x-1),\, x \ge 1;\\
\mu x - \frac{\sigma^2}2x^2,\, x \le 1.
\end{cases}
$$
First, note that the function $h$ is smooth everywhere except $x = 1$, and $h''(x) < 0$ for all $x \ne 1$. Therefore, if $h'(x) = 0$, then $h$ has a local maximum at $x$. Take derivatives on both intervals $(-\infty, 1]$ and $[1, \infty)$:
\begin{align*}
x \ge 1\quad \mbox{implies}\quad h'(x) &= (\mu-r) - \sigma^2x = 0  \implies x = x_1 :=\frac{\mu-r}{\sigma^2};\\
x \le 1\quad \mbox{implies}\quad h'(x) &= \mu - \sigma^2x = 0  \implies x = x_2 :=\frac{\mu}{\sigma^2}.
\end{align*}
On both these rays, $h$ is a parabola with branches facing down. 

Case 1. $ \mu -\si^2 \ge r$. Then $x_1,\, x_2 \ge 1$. Therefore, $h$ reaches maximum on $[1, \infty)$ at $x_1$, and on $(-\infty, 1]$ at $1$. Since $h$ reaches its maximum on $[1, \infty)$ at $x_1$ and not $1$, we have: $h(1) \le h(x_1)$. As a result, $x^* = x_1$. 

\smallskip

Case 2. $0 \le \mu -\si^2 \le r$. Then $x_1 \le 1$, but $x_2 \ge 1$. Therefore, $h$ reaches maximum on $[1, \infty)$ at $1$, and on $(-\infty, 1]$ at $1$. As a result, the global maximum will be at $x^*=1$.

\smallskip

Case 3. $ \mu -\si^2 \le 0$. Then $x_1, x_2 \le 1$. Therefore, $h$ reaches maximum on $[1, \infty)$ at $1$, and on $(-\infty, 1]$ at $x = x_2$. Similarly to Case 1, the global maximum is reached at $x^* = x_2$. 
\end{proof}

\begin{lemma} For the matrix $\CM$ defined in~\eqref{eq:m-matrix}, there exists a constant $c(\CM) > 0$ such that 
\begin{equation}
\label{eq:estimate-M}
x'\CM x \le -c(\CM)\norm{x}^2,\ \ x \in \Pi.
\end{equation}
\label{lemma:estimate-M}
\end{lemma}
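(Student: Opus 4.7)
The plan is to recognize $-\mathcal{M}$ as a weighted graph Laplacian and exploit the standard spectral theory for such matrices. First, I would rewrite the quadratic form in a manifestly nonpositive way. Using $c_{ij} = c_{ji}$ and the fact that $m_{ii} = -\sum_{k \ne i} c_{ik}$, a direct computation gives
\begin{equation*}
x'\mathcal{M}x \;=\; \sum_{i \ne j} c_{ij} x_i x_j - \sum_i x_i^2 \sum_{k \ne i} c_{ik} \;=\; -\frac12 \sum_{i,j=1}^N c_{ij}(x_i - x_j)^2 \;\le\; 0,
\end{equation*}
valid for all $x \in \BR^N$. In particular, $\mathcal{M}$ is symmetric and negative semidefinite.

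Next I would identify $\ker \mathcal{M}$. Since $\mathcal{M}$ is symmetric negative semidefinite, $\mathcal{M}x = 0$ is equivalent to $x'\mathcal{M}x = 0$, which by the identity above forces $x_i = x_j$ for every pair $i,j$ with $c_{ij} > 0$, i.e.\ for every edge of $G$. Because $G$ is connected (the hypothesis under which Lemma~\ref{lemma:estimate-M} is invoked in Theorem~\ref{thm:ergodic-graph}), transitivity along paths in $G$ forces all coordinates of $x$ to coincide, so $\ker \mathcal{M} = \mathrm{span}(e)$.

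The hyperplane $\Pi = \{x \in \BR^N : x \cdot e = 0\}$ is precisely the orthogonal complement of $\mathrm{span}(e)$. Since $\mathcal{M}$ is symmetric, it maps $\Pi$ into itself, and its restriction $\mathcal{M}|_\Pi$ is strictly negative definite: all $N-1$ remaining eigenvalues are strictly negative. Letting $c(\mathcal{M}) > 0$ denote the smallest absolute value among these (the weighted algebraic connectivity of $G$), the spectral theorem yields
\begin{equation*}
x'\mathcal{M}x \;\le\; -c(\mathcal{M})\,\norm{x}^2, \qquad x \in \Pi,
\end{equation*}
which is the claimed estimate~\eqref{eq:estimate-M}.

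The only nontrivial step is the rank identification $\ker \mathcal{M} = \mathrm{span}(e)$, and the connectivity-of-$G$ argument given above is the standard one for weighted graph Laplacians; everything else is linear algebra. I expect no essential obstacle beyond stating this carefully.
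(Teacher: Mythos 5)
Your proof is correct, and it takes a genuinely different route from the paper's. The paper treats $\CM$ as the generator of a continuous-time Markov chain on $\{1,\ldots,N\}$: connectivity of $G$ gives irreducibility, hence (by finiteness) a unique stationary distribution $\pi^Q=e/N$; negativity of the nonzero eigenvalues is imported from a cited result on generator matrices (\cite[Exercise 8.1]{IosifescuBook}); and uniqueness of the stationary distribution is what pins down $\ker\CM=\mathrm{span}(e)$. You instead use the weighted graph Laplacian identity
$x'\CM x=-\tfrac12\sum_{i,j}c_{ij}(x_i-x_j)^2$,
which makes negative semidefiniteness manifest and lets you read the kernel off the identity: $x'\CM x=0$ forces $x_i=x_j$ across every edge, and connectivity propagates this to all of $\{1,\ldots,N\}$. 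Both arguments land on the same spectral conclusion — restrict the symmetric matrix $\CM$ to $\Pi=e^\perp$, observe all eigenvalues there are strictly negative, and set $c(\CM)$ equal to the smallest one in absolute value — but yours is self-contained linear algebra and needs no Markov chain theory or external citation, which is a clean improvement. The only hypothesis you use implicitly (as does the paper) is $c_{ij}\ge 0$, which is built into the model.
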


\begin{proof} Note that $\CM$ is a generator matrix for a continuous-time Markov chain $Q = (Q(t),\, t  \ge 0)$ on $\{1, \ldots, N\}$. This Markov chain can be viewed as a biased random walk on the graph $G$: As it wants to jump out of a state $i$, it chooses one of its nearest neighbors $j$, such that $i$ and $j$ are connected, only not with uniform probability. This graph $G$ is connected. Therefore, this Markov chain is irreducible. Since it is finite, it is positive recurrent. From the standard results on continuous-time Markov chains, see for example \cite[Theorem 2.7.15]{Cambridge}, this Markov chain $Q$ has a unique stationary distribution 
$$
\pi^Q = \begin{bmatrix}\pi^Q_1& \ldots & \pi^Q_N\end{bmatrix}
$$
This stationary distribution satisfies $\pi^Q\CM = 0$. But the columns of the matrix $\CM$ sum up to zero. Therefore, $e'\CM = 0$, and $e/N$ is a stationary distribution. By uniqueness, $\pi^Q = e/N$. Let $\la_1, \ldots \la_N$ and $v_1, \ldots, v_N$ be the eigenvalues and eigenvectors of the matrix $\CM$
\begin{equation}
\label{eq:eigenvalues}
\CM v_i = \la_i v_i,\, i = 1, \ldots, N.
\end{equation}
The eigenvectors of the matrix $\CM$ are all real, because $\CM$ is symmetric. Next, nonzero eigenvalues are negative: This follows from \cite[Exercise 8.1]{IosifescuBook}. Each zero eigenvalue $\la_i$ has eigenvector $v_i$ which satisfies $v_i'\CM = 0$, that is, $v_i'$ is proportional to a stationary distribution. But the stationary distribution is unique, so we have (without loss of generality): 
$$
\la_1 = 0;\ \la_2, \ldots, \la_N < 0;\ v_1 = ce\ \mbox{for some constant}\ c. 
$$
Now, take an $x \in \BR^N$. Assume $v_1, \ldots, v_N$ are normalized: $\norm{v_i} = 1,\, i = 1, \ldots, N$. Because $\CM$ is symmetric, $v_1, \ldots, v_N$ form an orthonormal basis in $\BR^N$. Therefore, we can decompose
\begin{equation}
\label{eq:decomposition}
x = (x\cdot v_1)v_1 + (x\cdot v_2)v_2 + \ldots + (x\cdot v_N)v_N.
\end{equation}
For a vector $x \in \Pi$, we have: $x\cdot e = 0$, and therefore $x\cdot v_1 = 0$. Thus,~\eqref{eq:decomposition} takes the form 
\begin{equation}
\label{eq:truncated}
x = (x\cdot v_2)v_2 + \ldots + (x\cdot v_N)v_N.
\end{equation}
Apply the matrix $M$ to this vector in~\eqref{eq:truncated} and use~\eqref{eq:eigenvalues}. We have:
\begin{equation}
\label{eq:applied-M}
\CM x = (x\cdot v_2)\la_2v_2 + \ldots + (x\cdot v_N)\la_Nv_N.
\end{equation}
From~\eqref{eq:truncated} and~\eqref{eq:applied-M}, since $v_1, \ldots, v_N$ form an orthonormal basis of $\BR^N$, we have: 
\begin{equation}
\label{eq:M-product}
x'\CM x = \CM x\cdot x = \la_2(x\cdot v_2)^2 + \ldots + \la_N(x\cdot v_N)^2.
\end{equation}
In addition, multiplying~\eqref{eq:truncated} by itself, we get:
\begin{equation}
\label{eq:truncated-square}
\norm{x}^2 = x\cdot x = (x\cdot v_2)^2 + \ldots + (x\cdot v_N)^2.
\end{equation}
Let $c(\CM) := \min\left(|\la_2|, \ldots, |\la_N|\right) > 0$. Comparing~\eqref{eq:M-product} and~\eqref{eq:truncated-square}, we get~\eqref{eq:estimate-M}.
\end{proof}

\medskip\noindent


\begin{thebibliography}{12}


%
%
%
%

\bibitem{BFK2005} \textsc{Adrian D. Banner, E. Robert Fernholz, Ioannis Karatzas} (2005) Atlas Models of Equity Markets. \textit{Ann. Appl. Probab.} \textbf{15} (4), 2996--2330.

\bibitem{Ichiba11} \textsc{Adrian D. Banner, E. Robert Fernholz, Tomoyuki Ichiba, Ioannis Karatzas, Vassilios Papathanakos} (2011). Hybrid Atlas Models. \textit{Ann. Appl. Probab.} \textbf{21} (2), 609--644. 



\bibitem{Bo} \textsc{Lijun Bo, Agnostino Capponi} (2015). Systemic Risk in Interbanking Networks. \textit{SIAM J. Fin. Math.} \textbf{6} (1), 386--424.


\bibitem{New1} \textsc{Agostino Capponi, Xu Sun, David Yao} (2018). A Dynamic Network Model of Interbank Lending - Systemic Risk and Liquidity Provisioning. Available at SSRN:3028417,

\bibitem{ReneBook} \textsc{Rene Carmona} (2017). \textit{Lectures on BSDEs, Stochastic Control, and Stochastic Differential Games with Financial Applications.} SIAM Press. 

\bibitem{MFG} \textsc{Rene Carmona, Francois Delarue, Daniel Lacker} (2017). Mean-Field Games of Timing and Models for Bank Runs. \textit{Appl. Math. Opt.} \textbf{76} (1), 217--260.

\bibitem{Delay} \textsc{Rene Carmona, Jean-Pierre Fouque, Seyyed Mostafa Mousavi, Li-Hsien Sun} (2016). Systemic Risk and Stochastic Games with Delay. Available at arXiv:1607.06373. 

\bibitem{JP-Rene} \textsc{Rene Carmona, Jean-Pierre Fouque, Li-Hsien Sun} (2013). Mean-Field Games and Systemic Risk. Available at arXiv:1308.2172.

\bibitem{Lacker3} \textsc{Rene Carmona, Francois Delarue, Daniel Lacker} (2016). Mean Field Games with Common Noise. \textit{Ann. Probab.} \textbf{44} (6), 3740--3803. 

\bibitem{Lacker2} \textsc{Rene Carmona, Daniel Lacker} (2015). A Probabilistic Weak Formulation of Mean Field Games and Applications. \textit{Ann. Appl. Probab.} \textbf{25} (3), 1189--1231. 



\bibitem{Diamond} \textsc{Douglas W. Diamond, Philip H. Dyvbig} (1983). Bank Runs, Deposit Insurance, and Liquidity. \textit{J. Polit. Econ.} \textbf{91} (3), 401--419. 

\bibitem{DMT1995} \textsc{Douglas Down, Sean P. Meyn, Richard L. Tweedie} (1995). Exponential and Uniform Ergodicity for Markov Processes. \textit{Ann. Probab.} \textbf{23} (4), 1671--1691. 




\bibitem{Myopic} \textsc{David Feldman} (1992). Logarithmic Preferences, Myopic Decisions, and Incomplete Information. \textit{J. Fin. Quant. Anal.} \textbf{27} (4), 619-629.

\bibitem{Award} \textsc{Jean-Pierre Fouque, Tomoyuki  Ichiba} (2013). Stability in a Model of Interbank Lending. \textit{SIAM J. Fin. Math.} \textbf{4} (1), 784--803. 

\bibitem{Handbook} \textsc{Jean-Pierre Fouque, Joseph A. Langsam} (2013). \textit{Handbook of Systemic Risk.} Cambridge University Press. 


\bibitem{Garnier} \textsc{Josselin Garnier, George Papanicolaou, Tzu-Wei Yang} (2013). Large Deviations for a Mean Field Model of Systemic Risk. \textit{SIAM J. Fin. Math.} \textbf{4} (1), 151--184. 



\bibitem{IosifescuBook} \textsc{Marius Iosifescu} (2007). \textit{Finite Markov Processes and Their Applications.} Dover.


\bibitem{Cambridge} \textsc{Mark Kelbert, Yuri Suhov} (2008). \textit{Probability and Statistics by Example II. Markov Chains: a Primer in Random Processes and Their Applications.} Cambridge University Press. 

\bibitem{KhasBook} \textsc{Rafail Khasminskii} (2012). \textit{Stochastic Stability of Differential Equations.} \textit{Stochastic Modeling and Applied Probability} \textbf{66}, Springer. 

\bibitem{Kluppelberg2} \textsc{Oliver Kley, Claudia Kluppelberg, Lukas Reichel} (2015). Systemic Risk Through Contagion in a Core-Periphery Structured Banking Network. \textit{Advances in Mathematics of Finance}, edited by A. Palczewski and L. Stettner. Banach Center Publications.


\bibitem{Lacker1} \textsc{Daniel Lacker} (2016). A General Characterization of the Mean Field Limit for Stochastic Differential Games. \textit{Probab. Th. Rel. Fields} \textbf{165} (3), 581--648. 

\bibitem{Thaleia} \textsc{Daniel Lacker, Thaleia Zariphopoulou} (2017). Mean Field and $n$-Agent Games for Optimal Investment under Relative Performance Criteria. Available at arXiv:1703.07685. 

\bibitem{MT1993a} \textsc{Sean P. Meyn, Richard L. Tweedie} (1993). Stability of Markovian Processes II: Continuous-Time Processes and Sampled Chains. \textit{Adv. Appl. Probab.} \textbf{25} (3), 487--517.

\bibitem{MT1993b} \textsc{Sean P. Meyn, Richard L. Tweedie} (1993). Stability of Markovian Processes III: Foster-Lyapunov Criteria for Continuous-Time Processes. \textit{Adv. Appl. Probab.} \textbf{25} (3), 518--548. 

\bibitem{Sergey} \textsc{Sergey Nadtochiy, Mykhaylo Shkolnikov} (2017). Particle Systems with Singular Interaction Through Hitting Times: Application in Systemic Risk Modeling. Available at arXiv:1705.00691. 


\bibitem{MyOwn10} \textsc{Andrey Sarantsev} (2016). Reflected Brownian Motion in a Convex Polyhedral Cone: Tail Estimates for the Stationary Distribution. \textit{J. Th. Probab.} \textbf{30} (3), 1200-1223. 


\bibitem{Sun} \textsc{Li-Hsien Sun} (2017). Systemic Risk and Interbank Lending. Available at arXiv:1611.06672.

\end{thebibliography}
\end{document}